\documentclass[11pt]{article}
\usepackage[T1]{fontenc}
\usepackage{arcs}
\usepackage{graphicx}
\usepackage{subcaption}
\usepackage{todonotes}
\usepackage{mathtools}
\usepackage{soul,comment,cite}
\usepackage{amsthm}
\usepackage{amsmath}
\usepackage{amssymb}
\usepackage{systeme}

\usepackage{gensymb}
\usepackage[ruled,vlined]{algorithm2e}
\SetKwComment{Comment}{/* }{ */}
\SetKwInOut{Input}{input}\SetKwInOut{Output}{output}
\newtheorem{theorem}{Theorem}%[section]
\newtheorem{corollary}{Corollary}
\newtheorem{lemma}{Lemma}

\graphicspath{{./figures/}}
\usepackage[absolute]{textpos}

\usepackage{fullpage}
\usepackage[top=0.8in, bottom=0.9in, left=0.9in, right=0.9in]{geometry}

\usepackage[pdftex, plainpages = false, pdfpagelabels, 
                 bookmarks=false,
                 bookmarksopen = true,
                 bookmarksnumbered = true,
                 breaklinks = true,
                 linktocpage,
                 pagebackref,
                 colorlinks = true,  % was true
                 linkcolor = blue,
                 urlcolor  = cyan,
                 citecolor = red,
                 anchorcolor = green,
                 hyperindex = true,
                 hyperfigures
                 ]{hyperref}

\def\calD{\mathcal{D}}
\def\bbR{\mathbb{R}}

\def\calh{\mathcal{H}}
\def\vd{\text{Vor}}

\def\dist{\text{dist}}
\title{Maximum Independent Sets in Disk Graphs with Disks in Convex Position\thanks{A preliminary version of this paper will appear in {\em Proceedings of the 20th Scandinavian Symposium on Algorithm Theory (SWAT 2026)}.}}

\author{
Anastasiia Tkachenko\thanks{Kahlert School of Computing,
University of Utah, Salt Lake City, UT 84112, USA. {\tt anastasiia.tkachenko@utah.edu}}
\and
Haitao Wang\thanks{Kahlert School of Computing,
University of Utah, Salt Lake City, UT 84112, USA. {\tt haitao.wang@utah.edu}}
}

\date{}

\begin{document}

\maketitle

\vspace{-0.2in}
\begin{abstract}
For a set $\calD$ of disks in the plane, its disk graph $G(\calD)$ is the graph with vertex set $\calD$, where two vertices are adjacent if and only if the corresponding disks intersect. Given a set $\calD$ of $n$ weighted disks, computing a maximum independent set of $G(\calD)$ is NP-hard. In this paper, we present an $O(n^3\log n)$-time algorithm for this problem in a special setting in which the disks are in \emph{convex position}, meaning that every disk appears on the convex hull of $\calD$. 
This setting has been studied previously for disks of equal radius, for which an $O(n^{37/11})$-time algorithm was known.
Our algorithm also works in the weighted case where disks have weights and the goal is to compute a maximum-weight independent set. 
As an application of our result, we obtain an $O(n^3\log^2 n)$-time algorithm for the \emph{dispersion problem} on a set of $n$ disks in convex position: given an integer $k$, compute a subset of $k$ disks that maximizes the minimum pairwise distance among all disks in the subset.
\end{abstract}

\emph{Keywords:} disk graphs, independent sets, convex position, dispersion

\section{Introduction}
\label{sec:intro}
An \emph{independent set} in a graph is a subset of vertices with no edges between any two of them. The \emph{maximum independent set problem} asks for an independent set of maximum cardinality. In weighted graphs, where each vertex is assigned a weight, the \emph{maximum-weight independent set problem} seeks an independent set of maximum total weight.

In general graphs, these problems are computationally intractable~\cite{ref:KarpRe72}.
This hardness has motivated research along two complementary directions. One direction focuses on designing approximation algorithms, which trade exactness for efficiency. The other aims to identify restricted settings, such as specific graph classes or additional structural constraints, under which exact polynomial-time algorithms become possible. Exact polynomial-time algorithms are known for a variety of restricted graph classes, including circular~\cite{ref:GavrilAl73, ref:ApostolicoNe92}, chordal~\cite{ref:GavrilAl72}, trapezoid~\cite{ref:FelsnerTr97}, outerstring~\cite{ref:BoseCo22}, and Burling graphs~\cite{ref:RzazewskiPo26}, where the additional structure can be exploited algorithmically.

In light of the above, geometric intersection graphs are of particular interest~\cite{ref:AgarwalIn06}. In these graphs, vertices represent geometric objects and edges encode their intersections. In this work, we focus on \emph{disk graphs}, the intersection graphs of disks, which are among the most extensively studied geometric graph families. Beyond their intrinsic theoretical interest, disk graphs provide a natural abstraction for wireless and sensor networks, where connectivity is governed by transmission ranges that are commonly modeled as disks~\cite{ref:ClarkUn90,ref:PerkinsHi94,ref:PerkinsAd99,ref:BalisterCo05}.

Formally, let $\calD$ be a set of disks in the plane. The \emph{disk graph} $G(\calD)$ is the graph with vertex set $\calD$, where two vertices are adjacent if and only if the corresponding disks intersect. Thus, a subset $\calD' \subseteq \calD$ is an independent set in $G(\calD)$ if and only if the disks in $\calD'$ are pairwise disjoint. In addition, each disk is assigned a weight.

The maximum independent set problem in disk graphs remains NP-hard even when all disks have the same radius, in which case the graph is a \emph{unit-disk graph}~\cite{ref:ClarkUn90}. Many approximation algorithms for this problem have been developed; see, for example,~\cite{ref:DasDi20,ref:DasAp15,ref:MaratheSi95,ref:MatsuiAp98,ref:ChanAp12,ref:AgarwalIn06}. Beyond their algorithmic interest, independent sets in intersection graphs arise in several applications, such as map labeling in computational cartography~\cite{ref:AgarwalLa98}. 
For instance, in map labeling one is given a collection of candidate label regions and seeks a largest subset that can be placed without overlap; selecting such a subset is precisely a maximum independent set problem in the corresponding intersection graph.

These results naturally raise the question of whether, and to what extent, geometric structure can impose sufficient constraints to render otherwise intractable combinatorial problems tractable.

\subsection{Our result}

In this paper, we consider a special setting in which the disks in $\calD$ are in \emph{convex position}, meaning that every disk of $\calD$ contributes to the boundary of the convex hull of $\calD$ (see Figure~\ref{fig:convexpos}). Under this assumption, we present an exact algorithm that computes a maximum-weight independent set in $G(\calD)$ in $O(n^3\log n)$ time. This setting bridges the gap between the general two-dimensional problem and one-dimensional variants such as circular graphs, which admit near-linear-time algorithms~\cite{ref:ApostolicoNe92}.

\begin{figure}[t]
\begin{minipage}[h]{0.49\textwidth}
\begin{center}
\includegraphics[height=1.8in]{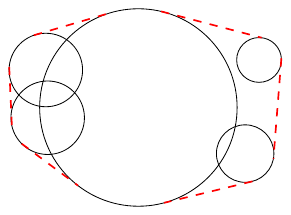}
\caption{\footnotesize Illustrating a set of disks in convex position. The red dashed segments are on the boundary of the convex hull of all disks.}
\label{fig:convexpos}
\end{center}
\end{minipage}
\hspace{0.05in}
\begin{minipage}[h]{0.49\textwidth}
\begin{center}
\includegraphics[height=1.8in]{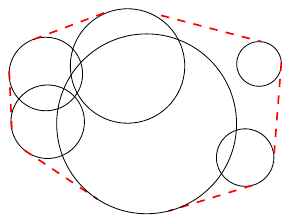}
\caption{\footnotesize Illustrating a set of disks in strongly convex position. The red dashed segments are on the boundary of the convex hull of all disks.}
\label{fig:strongconvexpos}
\end{center}
\end{minipage}
\vspace{-0.1in}
\end{figure}

The maximum-weight independent set problem under this convex position assumption has been studied previously, but only for the unit-disk case in which all disks have the same radius~\cite{ref:SingireddyAl23,ref:TkachenkoDo25}; in that setting, the previously best  algorithm runs in $O(n^{37/11})$ time~\cite{ref:TkachenkoDo25}. Our algorithm applies to disks of arbitrary radii and, moreover, improves the running time even in the unit-disk case.

We say that a set of disks is in \emph{strongly convex position} if every disk has a single maximal disk arc appearing on the boundary of the convex hull of all disks (see Figure~\ref{fig:strongconvexpos}; but the disks in Figure~\ref{fig:convexpos} are not in strongly convex position). We solve the problem gradually as follows.

\begin{enumerate}
    \item
    We first consider a very special case in which the disks of $\calD$ are in strongly convex position and $\calD$ has a maximum-weight independent set that is itself strongly convex; we refer to this as the \emph{strongly-convex-input-and-output} case. Our algorithm for this case is based on dynamic programming and can be viewed as an extension of the unit-disk algorithm of~\cite{ref:TkachenkoDo25} to disks of arbitrary radii. Moreover, we identify a new monotonicity property that allows our algorithm to run faster than the previous unit-disk algorithm~\cite{ref:TkachenkoDo25}. Our algorithm for this case is presented in Section~\ref{sec:inputoutput}. 

    \item
    We then handle the \emph{strongly-convex-input} case, in which $\calD$ is strongly convex but does not necessarily admit a strongly convex maximum-weight independent set. We reduce this case to the first one by adding a set $B$ of $O(n)$ auxiliary points, treated as special disks, such that (i) these special disks appear in any maximum-weight independent set of $\calD \cup B$, and (ii) for any subset $\calD' \subseteq \calD$, the set $\calD' \cup B$ is strongly convex. This case is discussed in Section~\ref{sec:input}.     

    \item
    Finally, we address the most general case in Section~\ref{sec:general}, where $\calD$ is not necessarily strongly convex and does not necessarily admit a strongly convex maximum-weight independent set. We reduce this case to the second one by again adding a set of auxiliary points as special disks. 
\end{enumerate}

As an application of our independent set algorithm, we also solve a \emph{dispersion problem} for a set of $n$ disks in convex position: given an integer $k$, the goal is to find a subset of $k$ disks that maximizes the minimum pairwise distance among all disks in the subset, where the distance between two disks is defined as the minimum distance between any point in one disk and any point in the other. Our algorithm for this problem runs in $O(n^3\log^2 n)$ time and is presented in Section~\ref{sec:dispersion}. 

\subsection{Related work}

Recent developments in geometric algorithms have demonstrated that the convex position constraint can serve as a powerful structural assumption. For a variety of classical problems, it enables exact polynomial-time algorithms, often by exploiting the inherent cyclic structure induced by convexity.

For example, the $k$-center problem is NP-hard for arbitrary point sets, but becomes polynomial-time solvable when the points are in convex position~\cite{ref:ChoiEf23,ref:TkachenkoDo25}. Another recent result that exploits convex position concerns dominating sets in disk graphs~\cite{ref:TkachenkoDo25, ref:TkachenkoCo26}.
Note that for unit-disk graphs, the disk set is in convex position if and only if the centers of the disks are in convex position; this equivalence no longer holds when disks have different radii.

In the literature, related convexity assumptions have also been studied for disk graphs, including settings in which the disk centers are in convex position and settings in which the disks themselves are in convex position. For instance, Herrera and Pérez-Lantero~\cite{ref:HerreraIn21} and Huemer and Pérez-Lantero~\cite{ref:HuemerTh20} investigate combinatorial properties arising when the centers follow a prescribed convex pattern: each disk uses a distinct side of a convex polygon as its diameter, and its center is the midpoint of that side. Under the convex position assumption on the disks, it has also been reported that a maximum clique in the corresponding disk graph can be computed in polynomial time~\cite{ref:ÇağırıcıMa18}.

The convex position assumption has also been explored for classical problems that are already polynomial-time solvable in general, as it can simplify the structure and lead to faster algorithms. A well-known example is the linear-time algorithm for computing the Voronoi diagram of points in convex position~\cite{ref:AggarwalA89}. Additional results for points in convex position can be found in~\cite{ref:LingasOn86,ref:RichardsA90,ref:ChazelleIm93,ref:TkachenkoCo25,ref:HerreraIn21}.

A closely related problem is \emph{dispersion} (also known as the \emph{maximally separated set} problem~\cite{ref:AgarwalCo06}). Given a set of points in the plane and an integer $k$, the goal is to select $k$ points that maximize their minimum pairwise distance; the problem is NP-hard in general~\cite{ref:WangAs88}. When the points are in convex position, however, the problem becomes polynomial-time solvable~\cite{ref:SingireddyAl23,ref:TkachenkoDo25}, and the previously best algorithm runs in $O(n^{37/11}\log n)$ time~\cite{ref:TkachenkoDo25}. Our dispersion problem generalizes this setting from points to disks and our algorithm is even faster.

\section{Preliminaries}
\label{sec:pre}

In this section, we introduce notation and basic concepts used throughout the paper.

Let $\calD$ denote a set of $n$ disks $D_i$ in the plane $\bbR^2$, for $1\le i\le n$. For each disk $D_i$, let $p_i$ be its center and let $r_i$ be its radius. For any subset $\calD' \subseteq \calD$, we use $\calh(\calD')$ to denote the convex hull of $\calD'$.

For any region $R$ in the plane, let $\partial R$ denote its boundary (e.g., $\partial \calh(\calD)$ is the boundary of $\calh(\calD)$ and $\partial D$ is the boundary of a disk $D$). We say that the disks of $\calD$ are in \emph{convex position} if the boundary of every disk appears on $\partial \calh(\calD)$, and they are in \emph{strongly convex position} if the boundary of every disk has a single (maximal) arc appearing on $\partial \calh(\calD)$. For simplicity, for any subset $\calD' \subseteq \calD$, if the disks of $\calD'$ are in (strongly) convex position, we also say that $\calD'$ is (strongly) convex.

For any two points $p$ and $q$ in the plane, let $\overline{pq}$ denote the line segment connecting them, and let $|pq|$ denote the Euclidean distance between $p$ and $q$. For a point $p$ with a \emph{weight} $w(p)$, define
\[
d_p(q) = |pq| + w(p)
\]
for any point $q$ in the plane. We call $d_p(q)$ the \emph{weighted distance} between $p$ and $q$.

For each disk $D_i \in \calD$, we define the weight of its center $p_i$ to be $-r_i$, i.e., the negative of the radius of $D_i$. Hence, for any point $q \in \bbR^2$, we have
$d_{p_i}(q) = |p_i q| - r_i$.
If $q$ lies outside $D_i$, then $d_{p_i}(q)$ equals the minimum Euclidean distance between $q$ and $D_i$. For this reason, we also refer to $d_{p_i}(q)$ as the \emph{distance} between $q$ and $D_i$ (noting that this distance is negative when $q$ lies in the interior of $D_i$).

Two disks $D_i, D_j \in \calD$ are said to be \emph{disjoint} if $|p_i p_j| > r_i + r_j$. A subset $\calD' \subseteq \calD$ is an \emph{independent set} if the disks in $\calD'$ are pairwise disjoint.

Throughout the paper, we assume that the disks of $\calD$ are in convex position (but not necessarily in strongly convex position), and that each disk $D_i$ is assigned a weight $W_i$. Our goal is to compute a maximum-weight independent set of $\calD$, that is, an independent set of disks whose total weight is maximized. Since disks with non-positive weight can be ignored, we assume $W_i > 0$ for all $D_i$. For any subset $\calD' \subseteq \calD$, we use $W(\calD')$ to denote the sum of the weights of the disks in $\calD'$.

For any disk $D$ and any two points $a,b \in \partial D$, we use $\partial_{[a,b]} D$ to denote the portion of $\partial D$ from $a$ to $b$ in counterclockwise order. Similarly, for any subset $\calD' \subseteq \calD$ and any two points $a,b \in \partial \calh(\calD')$, we use $\partial_{[a,b]} \calh(\calD')$ to denote the portion of $\partial \calh(\calD')$ from $a$ to $b$ in counterclockwise order.

Finally, for ease of exposition, we assume general position: no point in the plane is equidistant from four disks of $\calD$, and no line is tangent to three disks.

\section{The strongly-convex-input-and-output case}
\label{sec:inputoutput}

We consider the \emph{strongly-convex-input-and-output} case, in which $\calD$ is strongly convex and admits a maximum-weight independent set that is itself strongly convex. To simplify the discussion, we assume that each disk in $\calD$ has positive radius (i.e., no disk degenerates to a point). The same algorithm applies when some disks are points, but handling such degeneracies would require more technical discussion. We present an algorithm that computes a maximum-weight independent set in $O(n^3\log n)$ time.

We begin with the following lemma.

\begin{lemma}
\label{lem:3IS}
For any subset of three disks $\{D_i, D_j, D_k\} \subseteq \calD$ that forms a strongly convex independent set, there exists a unique point $v_{ijk} \in \bbR^2$ that is equidistant from the three disks. Furthermore, $v_{ijk}$ lies outside each of the three disks.
\end{lemma}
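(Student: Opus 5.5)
The approach is to read the statement through the additively weighted Voronoi diagram of the three centers $p_i,p_j,p_k$, with weights $-r_i,-r_j,-r_k$. A point equidistant from the three disks is exactly a Voronoi vertex of this diagram, so the claim becomes: the diagram of these three sites has exactly one vertex. The second claim is easy and I would do it first.

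\emph{Step 1 (no equidistant point lies in a disk).} Let $q$ be a point with $d_{p_i}(q)=d_{p_j}(q)=d_{p_k}(q)=t$.
\begin{itemize}
\item Suppose $q$ lies in $D_i$ (closed), so $t = d_{p_i}(q)\le 0$.
\item Since the disks are pairwise disjoint, $q$ lies outside $D_j$, so $d_{p_j}(q)>0$. This contradicts $d_{p_j}(q)=t\le 0$.
\end{itemize}
Hence $q$ lies outside all three disks and $t>0$. Equivalently, equidistant points correspond exactly to circles of radius $t$ that touch all three disks externally.

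\emph{Step 2 (structure of the diagram).} Disjointness gives two facts. No disk contains another, so every site has a nonempty Voronoi region. Each bisector $\{q: d_{p_a}(q)=d_{p_b}(q)\}$ is one branch of a hyperbola, hence a simple unbounded curve.

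I would invoke the standard fact about additively weighted Voronoi diagrams (Apollonius diagrams) of disjoint disks. A site has an unbounded region exactly when its disk touches $\partial\calh$. Moreover, the cyclic order of the unbounded regions "at infinity" matches the cyclic order of the disk arcs along $\partial\calh$; in particular, each maximal arc of $\partial D_a$ on the hull corresponds to one unbounded face of $D_a$'s region. Strong convexity says each of the three disks contributes exactly one maximal arc, so the diagram has exactly three unbounded faces, one per site, in the cyclic order $D_i,D_j,D_k$.

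\emph{Step 3 (counting vertices).} The diagram of three sites has $0$, $1$ or $2$ vertices; I would rule out $0$ and $2$ with an Euler-type count.
\begin{itemize}
\item \emph{Zero vertices.} The bisectors are pairwise disjoint curves, so the regions are "strips" arranged linearly. Then some site's region meets infinity in two components, so the hull order would read $D_a,D_b,D_c,D_b$ for some labeling. This means $D_b$ has two arcs on the hull, contradicting strong convexity.
\item \emph{Two vertices.} Two degree-3 vertices joined by a single edge give four unbounded edges, hence four unbounded faces among three sites. Again some disk has two hull arcs, a contradiction. The alternative configuration, two vertices joined by three edges, forces a bounded region, and that disk would not appear on the hull, contradicting convex position.
\end{itemize}
Hence there is exactly one vertex, and it is the point $v_{ijk}$.

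The main obstacle is making the correspondence in Step 2 between unbounded faces and hull arcs rigorous. To do this, I would consider directions $u$ on the unit circle. For a point $q = su$ with $s\to\infty$, $d_{p_a}(q)\approx s-\langle p_a,u\rangle - r_a$. So the nearest site in direction $u$ is the disk maximizing $\langle p_a,u\rangle + r_a$, which is exactly the disk supporting $\calh$ in direction $u$. This identifies the cyclic sequence of regions at infinity with the sequence of hull arcs, and the counting arguments above then go through.
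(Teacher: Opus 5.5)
Your argument is correct and follows essentially the paper's route. You reduce to counting vertices of the additively weighted Voronoi diagram of the three centers (at most two, by Sharir), rule out two and zero vertices by matching the diagram's behaviour at infinity with the hull arcs (convex position forbids a bounded cell, strong convexity forbids a cell meeting infinity twice), and get exteriority from disjointness; your support-function asymptotics make self-contained what the paper takes from Sharir and its outward-ray argument. One small slip: your two-vertex enumeration omits two vertices joined by two edges (the configuration that actually occurs, with a bounded cell) and by no edge. Both fall to your own arguments, and more simply, once Step~2 gives exactly three unbounded edges, parity finishes Step~3 at once: two degree-3 vertices joined by $m$ edges have $6-2m$ unbounded edge-ends, and zero vertices also give an even number, so neither can equal $3$.
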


\begin{proof}
Let $\calD' = \{D_i, D_j, D_k\}$. Recall that the center of each disk in $\calD'$ is assigned a weight. We consider the additively weighted Voronoi diagram $\vd(\calD')$ of the centers of $\calD'$, which coincides with the Voronoi diagram of the disks in $\calD'$~\cite{ref:SharirIn85}. Each Voronoi vertex of $\vd(\calD')$ corresponds to a point that is equidistant from the three disks. Thus, it suffices to show that $\vd(\calD')$ has exactly one Voronoi vertex.

Sharir~\cite{ref:SharirIn85} showed that $\vd(\calD')$ has at most two Voronoi vertices. We first argue that $\vd(\calD')$ cannot have two Voronoi vertices. Suppose, for contradiction, that it does. Then the Voronoi cell of at least one disk in $\calD'$ must be closed~\cite{ref:SharirIn85}.

On the other hand, since $\calD'$ is convex, and by the general position assumption, each disk $D \in \calD'$ has at least one point $p$ that lies on $\partial\calh(\calD')$ and does not belong to any other disk in $\calD'$. Let $\rho$ be a ray emanating from $p$ in the direction away from the center of $D$. For any point $q \in \rho$, it is easy to see that $q$ is closer to $D$ than to any other disk in $\calD'$. Hence, $\rho$ is contained in the Voronoi cell of $D$, implying that this Voronoi cell is unbounded and therefore not closed.

This contradicts the assumption that some Voronoi cell in $\vd(\calD')$ is closed. Thus, $\vd(\calD')$ has at most one Voronoi vertex.

We next show that $\vd(\calD')$ has at least one Voronoi vertex. Since $\calD'$ is strongly convex, each disk in $\calD'$ has a single arc appearing on $\partial \calh(\calD')$. This implies that every pair of disks in $\calD'$ is adjacent along $\partial \calh(\calD')$; that is, their arcs on $\partial \calh(\calD')$ are connected by an edge of $\partial \calh(\calD')$ that is a common tangent to the two disks. Consequently, each pair of disks defines a Voronoi edge in $\vd(\calD')$~\cite{ref:SharirIn85}. These three Voronoi edges must intersect at a common point, which is therefore a Voronoi vertex of $\vd(\calD')$.

We conclude that $\vd(\calD')$ has exactly one Voronoi vertex $v_{ijk}$. Since $\calD'$ is an independent set, the disks in $\calD'$ are pairwise disjoint. As $v_{ijk}$ is equidistant from all three disks, it cannot lie inside any of them. This completes the proof.
\end{proof}

In light of the above lemma, for any subset of three disks $\{D_i, D_j, D_k\} \subseteq \calD$ that forms a strongly convex independent set, there exists a disk centered at $v_{ijk}$ that is tangent to all three disks. We use $D_{ijk}$ to denote this disk, and let $r_{ijk}$ denote its radius. Note that $r_{ijk} = d_{p_i}(v_{ijk}) = d_{p_j}(v_{ijk}) = d_{p_k}(v_{ijk})$.

Since $\calD$ is strongly convex, each disk in $\calD$ contributes a single arc to the convex hull boundary $\partial \calh(\calD)$, and $\partial \calh(\calD)$ consists of an alternating sequence of disk arcs and line segments, where each line segment is an outer common tangent of two adjacent disks along $\partial \calh(\calD)$. Let
\[
\calD = \langle D_1, D_2, \ldots, D_n \rangle
\]
be a cyclic list of the disks whose arcs appear on $\partial \calh(\calD)$, ordered counterclockwise. For any $i \neq j$, let $\calD(i,j)$ denote the sublist of $\calD$ from $D_i$ to $D_j$ in counterclockwise order along $\partial \calh(\calD)$ excluding $D_i$ and $D_j$. That is,
\[
\calD(i,j) =
\begin{cases}
\langle D_{i+1}, D_{i+2}, \ldots, D_{j-1} \rangle, & \text{if } i < j, \\
\langle D_{i+1}, D_{i+2}, \ldots, D_n, D_1, \ldots, D_{j-1} \rangle, & \text{if } i > j .
\end{cases}
\]

In addition, we have the following lemma, which will be used later.

\begin{lemma}
\label{lem:orderconsistent}
For any subset $\calD' \subseteq \calD$ that is strongly convex, the counterclockwise order of the disks of $\calD'$ along $\partial \calh(\calD')$ is consistent with the index order of $\calD$.
\end{lemma}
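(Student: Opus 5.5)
We plan to prove the statement by describing both orders through support functions, that is, through outward normal directions. For a compact convex set $K$ and a unit direction $u$, let $h_K(u)=\max_{x\in K}\langle x,u\rangle$. For disks we have $h_{D_i}(u)=\langle p_i,u\rangle+r_i$. The support function of the hull of a union is the pointwise maximum, so $h_{\calh(\calD')}(u)=\max_{D_i\in\calD'}h_{D_i}(u)$.

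\textbf{Step 1: normal sets.} A point $x\in\partial D_i$ with outward normal $u$ lies on $\partial\calh(\calD')$ exactly when $D_i$ attains this maximum at $u$. Define the normal set
\[
N_i(\calD')=\{u : h_{D_i}(u)=h_{\calh(\calD')}(u)\}.
\]
Then the arc of $D_i$ on $\partial\calh(\calD')$ is the image of $N_i(\calD')$ under $u\mapsto p_i+r_iu$. By strong convexity and positive radii, each $N_i(\calD)$ is a single closed arc of the unit circle $S^1$. These arcs cover $S^1$, overlap only at endpoints by general position, and appear counterclockwise in the same order as the disks along $\partial\calh(\calD)$. The last fact holds because traversing $\partial\calh(\calD)$ counterclockwise rotates the outward normal monotonically, and along tangent segments the normal stays constant.

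\textbf{Step 2: monotonicity.} For $\calD'\subseteq\calD$ and $D_i\in\calD'$, we have $h_{\calh(\calD')}\le h_{\calh(\calD)}$ pointwise. Whenever $D_i$ attains the maximum over $\calD$, it therefore also attains it over $\calD'$. Hence $N_i(\calD)\subseteq N_i(\calD')$.

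\textbf{Step 3: comparing orders.} Since $\calD'$ is strongly convex, the sets $N_i(\calD')$ for $D_i\in\calD'$ are pairwise interior-disjoint closed arcs covering $S^1$. Their counterclockwise order is the order of $\calD'$ along $\partial\calh(\calD')$, by the same argument as in Step 1. Each $N_i(\calD')$ contains the nonempty arc $N_i(\calD)$, and these subarcs are interior-disjoint. Choose a representative direction $u_i$ in the interior of $N_i(\calD)$. The cyclic order of the points $u_i$ on $S^1$ is the index order restricted to $\calD'$, because the arcs $N_i(\calD)$ are ordered by index. This order also equals the cyclic order of the arcs $N_i(\calD')$, because disjoint arcs are ordered by any choice of interior points. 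So the two cyclic orders agree.

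\textbf{Main obstacle.} The delicate point is the identification in Steps 1 and 3: that the counterclockwise order of arcs along the hull boundary equals the counterclockwise order of their normal arcs, and that each normal set is a single arc precisely because of strong convexity. For this I would invoke the standard fact that the Gauss map of a convex curve is monotone. A second degenerate point is a normal arc $N_i(\calD)$ reduced to one direction, which would mean $D_i$ touches the hull at a single point. I would exclude this using the general position assumption that no line is tangent to three disks, or handle it by choosing $u_i$ as that single direction, since it still lies strictly between neighboring arcs.
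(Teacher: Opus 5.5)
Your proof is correct and has the same skeleton as the paper's: for each $D_i$, fix a witness on $\partial\calh(\calD)$, show it remains a witness for $\calh(\calD')$ because the hull only shrinks, then use strong convexity of $\calD'$ and a circular parametrization common to both hulls to transfer the cyclic order. The only difference is that you work in the dual setting, with outward normals, the inequality $h_{\calh(\calD')}\le h_{\calh(\calD)}$ and monotonicity of the Gauss map, whereas the paper uses boundary points $q_i\in\partial D_i\cap\partial\calh(\calD)$ (which stay on $\partial\calh(\calD')$ since $\calh(\calD')\subseteq\calh(\calD)$) and the angular order around a point $o$ interior to $\calh(\calD')$; the degenerate single-point contacts you worry about are equally glossed over in the paper and fall under the same general-position assumptions.
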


\begin{proof}
Consider a strongly convex subset $\calD' \subseteq \calD$, and let $o$ be a point in the interior of $\calh(\calD')$. Since $\calD' \subseteq \calD$, we have $\calh(\calD') \subseteq \calh(\calD)$, and thus $o$ also lies in the interior of $\calh(\calD)$.

For each disk $D_i \in \calD$, let $q_i$ be a point on $\partial D_i$ that appears on $\partial \calh(\calD)$. Because each disk of $\calD$ contributes a single arc to $\partial \calh(\calD)$, the counterclockwise order of the points $q_i$ along $\partial \calh(\calD)$ is exactly the index order of $\calD$. Moreover, since $\calh(\calD)$ is convex and $o$ lies in its interior, this order coincides with the counterclockwise order of the points $q_i$ with respect to $o$. Hence, the counterclockwise order of the points $q_i$ around $o$ is exactly the index order of $\calD$.

Now consider disks in $\calD'$. For each $D_i \in \calD'$, since $q_i\in \partial \calh(\calD)$ and $\calD'\subseteq\calD$, the point $q_i$ lies on $\partial \calh(\calD')$. Since $\calD'$ is strongly convex, $q_i$ belongs to the unique arc of $\partial D_i$ that appears on $\partial \calh(\calD')$. Therefore, the counterclockwise order of the disks of $\calD'$ along $\partial \calh(\calD')$ coincides with the counterclockwise order of the corresponding points $q_i$ on $\partial \calh(\calD')$. As $\calh(\calD')$ is convex and $o$ lies in its interior, this order is again the counterclockwise order of the points $q_i$ with respect to $o$, which is consistent with the index order of $\calD$. The lemma follows.
\end{proof}

Our algorithm is based on dynamic programming. In Section~\ref{sec:description}, we describe the algorithm, define the subproblems of the dynamic program, and present the dependency relations. We prove the correctness of the algorithm in Section~\ref{sec:correct}, and discuss how to implement it efficiently in Section~\ref{sec:impl}.

\subsection{Algorithm description}
\label{sec:description}

For three disk indices $i,j,k$, we call an ordered triple $(i,j,k)$ a \emph{canonical triple} if $D_i$, $D_j$, and $D_k$ appear on $\partial \calh(\calD)$ in counterclockwise order and $\{D_i, D_j, D_k\}$ is a strongly convex independent set.

For a canonical triple $(i,j,k)$, since $\{D_i, D_j, D_k\}$ is a strongly convex independent set, the disk $D_{ijk}$ exists by Lemma~\ref{lem:3IS}. We define $\calD_k(i,j)$ as the set of disks $D \in \calD(i,j)$ such that $D$ is disjoint from $D_i$, $D_j$, and $D_{ijk}$. For convenience, for any two disk indices $i$ and $j$ such that $D_i$ is disjoint from $D_j$, we also treat $(i,j,0)$ as a canonical triple, and define $\calD_0(i,j)$ to be the set of disks $D \in \calD(i,j)$ that are disjoint from both $D_i$ and $D_j$.

For a canonical triple $(i,j,k)$, including the case $k=0$, let $\calD'_k(i,j)$ denote the subset of disks $D \in \calD_k(i,j)$ such that $\{D, D_i, D_j\}$ is strongly convex.

\paragraph{Subproblems.}
For a canonical triple $(i,j,k)$, including the case $k=0$, we define $f(i,j,k)$ to be the maximum total weight of any subset $\calD' \subseteq \calD_k(i,j)$ such that $\calD' \cup \{D_i, D_j\}$ forms a strongly convex independent set. If no such subset $\calD'$ exists, we set $f(i,j,k) = 0$.

The following lemma gives the dependency relation among the subproblems. Since the proof is lengthy and technical, we defer it to Section~\ref{sec:correct}.

\begin{lemma}
\label{lem:relation}
For any canonical triple $(i,j,k)$, including the case $k=0$, the following holds:
\begin{equation}
\label{eq:deprel}
f(i,j,k) =
\begin{cases}
\displaystyle
\max_{D_l \in \calD'_k(i,j)} \bigl( f(i,l,j) + f(l,j,i) + W_l \bigr), & \text{if } \calD'_k(i,j) \neq \emptyset, \\[1ex]
0, & \text{otherwise.}
\end{cases}
\end{equation}
\end{lemma}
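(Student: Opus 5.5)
We prove the recurrence by showing two inequalities: the right-hand side is at most $f(i,j,k)$ (feasibility), and $f(i,j,k)$ is at most the right-hand side (optimality). The case $\calD'_k(i,j)=\emptyset$ is immediate. Any nonempty feasible $\calD'$ for $f(i,j,k)$ consists of disks of $\calD_k(i,j)$. Since $\calD'\cup\{D_i,D_j\}$ is strongly convex, every three-element subset $\{D,D_i,D_j\}$ with $D \in \calD'$ is strongly convex. This follows because a disk whose boundary contributes a single arc to the hull of a superset contributes a single arc to the hull of a subset containing it; I would verify this via supporting directions. Hence $\calD'\subseteq \calD'_k(i,j)$, so $\calD'=\emptyset$ and $f=0$.

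\textbf{Optimality ($\le$).} Let $\calD^*$ be an optimal nonempty subset for $f(i,j,k)$, so $S=\calD^*\cup\{D_i,D_j\}$ is a strongly convex independent set. The choice of the splitting disk $D_l$ is the crux. I would take $D_l\in\calD^*$ to be the disk whose Voronoi region in $\vd(S)$ contains the Voronoi vertex adjacent to the Voronoi edge of $D_i,D_j$. Equivalently, $D_l$ is the unique disk with $D_{ilj}$ empty of the other disks of $S$, like a Delaunay triangle on the chord $D_iD_j$. Such a vertex exists because $\vd(S)$ of a strongly convex set is a tree whose leaves are the hull edges, so every pair of hull-consecutive disks along the ``chord'' sees a vertex. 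With this choice, $D_{ilj}$ is disjoint from every disk of $S\setminus\{D_i,D_l,D_j\}$.

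Next, Lemma~\ref{lem:orderconsistent} shows that the disks of $\calD^*$ before $D_l$ lie in $\calD(i,l)$ and those after lie in $\calD(l,j)$. These pieces, together with their endpoint disks, are subsets of $S$, hence strongly convex independent sets. They avoid $D_{ilj}$, so they are feasible for $f(i,l,j)$ and $f(l,j,i)$; note that $(i,l,j)$ is canonical. Thus $W(\calD^*)\le f(i,l,j)+f(l,j,i)+W_l$. It remains to check $D_l\in\calD'_k(i,j)$, which is immediate since $\calD^*\subseteq\calD_k(i,j)$ and $\{D_i,D_l,D_j\}\subseteq S$.

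\textbf{Feasibility ($\ge$).} Fix $D_l\in\calD'_k(i,j)$ and optimal sets $A$ for $f(i,l,j)$ and $B$ for $f(l,j,i)$. I must show that $T=A\cup B\cup\{D_i,D_l,D_j\}$ is a strongly convex independent set contained in $\calD_k(i,j)\cup\{D_i,D_j\}$.

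First, containment in $\calD_k(i,j)$, i.e.\ disjointness from $D_{ijk}$. I would argue that $D_{ilj}$ together with $D_i$, $D_l$, $D_j$ "shields" the region on the far side: any disk in $\calD(i,l)$ disjoint from $D_i$, $D_l$ and $D_{ilj}$ lies in the region cut off by the bitangent-circle structure. That region is disjoint from $D_{ijk}$, since $D_l$ itself avoids $D_{ijk}$ and $k$ lies on the other side of the chord $ij$.

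Second, independence across the two sides. A disk $a\in A$ and a disk $b\in B$ are separated by the Voronoi edges of $D_{ilj}$: $a$ lies in the part of the plane closer to $D_i \cup D_l$, and $b$ in the part closer to $D_l\cup D_j$, as seen from $v_{ilj}$. Since both avoid $D_{ilj}$, they cannot meet.

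Third, strong convexity. Each side is strongly convex with its chord disks, and gluing along $D_l$ preserves the single-arc property. Here I would use the angular ordering around an interior point, as in Lemma~\ref{lem:orderconsistent}.

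\textbf{Main obstacle.} I expect these shielding/separation claims about $D_{ilj}$ to be the hardest part, since they are genuinely geometric for unequal radii. I would first try to prove them using the additively weighted Voronoi diagram of $\{D_i,D_l,D_j\}$: the three Voronoi edges emanating from $v_{ilj}$ partition the plane into three cells. I would then show that a disk disjoint from $D_{ilj}$ and lying counterclockwise between $D_i$ and $D_l$ is confined to the unbounded cell region bounded by the $(i,l)$-edge.
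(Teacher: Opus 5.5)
Your proposal has a genuine gap. It starts with the claim you use for the empty case: strong convexity is \emph{not} inherited by subsets. For a counterexample, let $D_i,D_j$ be unit disks centered at $(0,0)$ and $(10,0)$, let $D_h$ be the disk of radius $1/2$ centered at $(-1.6,0)$, and let $D_m$ be the unit disk centered at $(2,-4)$. Then $\{D_i,D_h,D_m,D_j\}$ is a strongly convex independent set with hull order $D_i,D_h,D_m,D_j$, so $\{D_h,D_m\}\subseteq\calD(i,j)$. Yet in $\calh(\{D_h,D_i,D_j\})$ both lines $y=\pm1$ are supporting, and $D_i$ contributes two arcs. This is exactly the situation the paper treats separately in the second case of Lemma~\ref{lem:notintersect}.

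You use the false claim three times:
\begin{itemize}
\item \emph{Empty case.} This only needs a repair: one disk suffices, namely one reaching strictly below the lower common tangent of $D_i,D_j$ (Lemma~\ref{lem:exist}).
\item \emph{$D_l\in\calD'_k(i,j)$.} The conclusion is true for your choice, but it must be argued. Your $D_l$ (the third site at the finite end of the $(i,j)$-edge of $\vd(S)$) coincides, by Lemma~\ref{lem:preorder}, with the paper's $D_l$: the disk with $\prec$-largest $v_{ijh}$ among those forming a strongly convex triple with $D_i,D_j$.
\item \emph{The pieces.} The serious one is ``these pieces are subsets of $S$, hence strongly convex.'' Your empty-circle property correctly gives $S_1\subseteq\calD_j(i,l)$, a clean substitute for Lemma~\ref{lem:notintersect}. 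But strong convexity of $S_1\cup\{D_i,D_l\}$ needs the pocket argument of Lemma~\ref{lem:pocket}. The disks of $S_1$ lie in the region bounded by $\partial\calh(S)$ and arcs of $D_i$, $D_{ilj}$, $D_l$. The tangency points of $D_{ilj}$ with $D_i$ and $D_l$ are not on $\partial\calh(\{D_i,D_l\})$. Hence that region lies inside the convex region closed off by the common tangent of $D_i$ and $D_l$, and this convex region is $\calh(S_1\cup\{D_i,D_l\})$ with one arc per disk. It is the emptiness of $D_{ilj}$, not heredity, that makes the pieces strongly convex.
\end{itemize}

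The feasibility direction is a sketch whose key steps are missing or incorrect.
\begin{itemize}
\item \emph{Disjointness from $D_j$ and $D_i$.} Membership in $\calD_k(i,j)$ also requires each disk of $A$ to avoid $D_j$, and each disk of $B$ to avoid $D_i$. Knowing $A\subseteq\calD_j(i,l)$ gives only disjointness from $D_i$, $D_l$, $D_{ilj}$, and you never address this. In the paper it is Lemma~\ref{lem:diskdisjoint}. That lemma first shows $A\cup\{D_i,D_l,D_j\}$ is strongly convex by gluing the two hulls at points $q_i,q_l\in\partial\calh(\calD)$. It then uses that the Voronoi cell of $D_j$ is adjacent only to those of $D_i$ and $D_l$.
\item \emph{Disjointness from $D_{ijk}$.} This needs no ``shielding'' picture. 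By Lemma~\ref{lem:preorder}, $D_h$ avoiding $D_{ilj}$ gives $v_{ijh}\prec v_{ijl}$, and $D_l$ avoiding $D_{ijk}$ gives $v_{ijl}\prec v_{ijk}$. Hence $v_{ijh}\prec v_{ijk}$. Triples $\{D_h,D_i,D_j\}$ that are not strongly convex are handled separately.
\item \emph{Cross-side independence.} Your separation fails as stated. In $\vd(\{D_i,D_l,D_j\})$, the regions closer to $D_i\cup D_l$ and closer to $D_l\cup D_j$ share the entire cell of $D_l$. So disks of $A$ and $B$ sitting on the two sides of $D_l$ are not separated. The actual separator is the chain $D_i, D_{ilj}, D_l, D_j$, which cuts $\calh(T)$ into two disjoint pockets. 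That is why the paper proves $T$ strongly convex (Lemma~\ref{lem:convex}) before deducing independence, not the other way round.
\end{itemize}
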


We define $W^*$ to be the optimal objective value of our problem, that is, $W^*$ equals the total weight of a maximum-weight independent set of $\calD$. For convenience, if disks $D_i$ and $D_j$ intersect, we define
$f(i,j,0) = -W_i - W_j$.
The next lemma shows that the optimal value $W^*$ can be computed using the values $f(i,j,0)$ for all pairs $(i,j)$. This explains our choice of $f(i,j,k)$ as the subproblems in the dynamic program.

\begin{lemma}\label{lem:optwgt}
$W^*=\max_{1\leq i,j\leq n} (f(i,j,0)+W_i+W_j)$.
\end{lemma}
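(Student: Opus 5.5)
We prove the two inequalities $W^*\ge \max_{i,j}(f(i,j,0)+W_i+W_j)$ and $W^*\le \max_{i,j}(f(i,j,0)+W_i+W_j)$ separately. Throughout we work in the strongly-convex-input-and-output case, so there is a maximum-weight independent set $\calD^*$ that is strongly convex. We use the convention $f(i,j,0)=-W_i-W_j$ when $D_i$ and $D_j$ intersect. We also allow $i=j$ in the maximum; for the degenerate case $i=j$ we read $\calD(i,i)$ as all disks other than $D_i$, or simply handle $|\calD^*|\le 1$ separately.

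\textbf{Lower bound on $W^*$.} Fix a pair $(i,j)$. If $D_i$ and $D_j$ intersect, the term equals $0\le W^*$. Otherwise $(i,j,0)$ is canonical. By definition, $f(i,j,0)$ is either $0$ or the weight of some $\calD'\subseteq\calD_0(i,j)$ such that $\calD'\cup\{D_i,D_j\}$ is a strongly convex independent set. In both cases $\calD'\cup\{D_i,D_j\}$ is an independent set; in the first case we take $\calD'=\emptyset$, and $\{D_i,D_j\}$ is independent since the disks are disjoint. Its weight is $f(i,j,0)+W_i+W_j$, so it is at most $W^*$.

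\textbf{Upper bound on $W^*$.} First suppose $|\calD^*|\ge 2$. Let $D_i$ and $D_j$ be two consecutive disks of $\calD^*$ in the cyclic index order; by Lemma~\ref{lem:orderconsistent}, this is also their ccw order along $\partial\calh(\calD^*)$. Choose them so that every other disk of $\calD^*$ lies in $\calD(i,j)$. Concretely, take $D_j$ to be the ccw successor of $D_i$ along $\calD^*$, read cyclically; equivalently, take the pair so that $\calD(j,i)$ contains no disk of $\calD^*$. Then $D_i$ and $D_j$ are disjoint, so $(i,j,0)$ is canonical.

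Set $\calD'=\calD^*\setminus\{D_i,D_j\}$. Each disk of $\calD'$ lies in $\calD(i,j)$ and is disjoint from $D_i$ and $D_j$, so $\calD'\subseteq\calD_0(i,j)$. Moreover, $\calD'\cup\{D_i,D_j\}=\calD^*$ is a strongly convex independent set. Hence $f(i,j,0)\ge W(\calD')$, which gives $f(i,j,0)+W_i+W_j\ge W^*$.

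If $|\calD^*|=1$, say $\calD^*=\{D_i\}$, then all disks pairwise intersect. If $n=1$, the claim is trivial under the $i=j$ convention. Otherwise every $D_j$ with $j\neq i$ intersects $D_i$, so each pair term with $j\ne i$ equals $0$. We therefore need the $i=j$ term to give $W_i$, which requires $f(i,i,0)=0$ in this case. Since the intended formula may restrict to $i\ne j$, I would state the convention explicitly: allow $i=j$, with $f(i,i,0)$ defined as the best set among the other disks, as above. Alternatively, note that $W^*=\max(\max_i W_i,\ \max_{i\ne j}\cdots)$. I expect the authors to treat this as a triviality, and I will handle it with that convention.

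\textbf{Main obstacle.} The delicate step is the choice of the pair $(i,j)$: we must ensure that all remaining disks of $\calD^*$ lie in $\calD(i,j)$ rather than straddling both arcs. This is exactly where Lemma~\ref{lem:orderconsistent} is needed. It guarantees that the hull order of $\calD^*$ matches the index order, so consecutive disks of $\calD^*$ in the index order leave all the other disks on one side.
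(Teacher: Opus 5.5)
\textbf{Main case.} For $|\calD^*|\ge 2$ your argument is the paper's. Each term is the weight of an independent set, and a strongly convex optimum is captured by $f(i,j,0)$ for a suitable consecutive pair. Because you pick the pair by cyclic \emph{index} order rather than as neighbours on the boundary of the hull of the optimum, Lemma~\ref{lem:orderconsistent} is not actually needed in your version; the paper needs it only because it picks hull neighbours.

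However, your concrete choice is reversed. If $D_j$ is the counterclockwise successor of $D_i$ among the disks of $\calD^*$, then $\calD(i,j)$ contains no disk of $\calD^*$, and the remaining ones all lie in $\calD(j,i)$. You need $D_i$ to be the successor of $D_j$, which is what your ``equivalently'' clause (no disk of $\calD^*$ in $\calD(j,i)$) says. The two descriptions are not equivalent. Since the maximum ranges over ordered pairs, swapping the roles fixes this.

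\textbf{Singleton optimum.} You are right that the paper's proof tacitly assumes $|\calD^*|\ge 2$. Under the stated convention a disk meets itself, so $f(i,i,0)=-2W_i$. Then every term involving a singleton optimum $\{D_i\}$ equals $0$, and the formula can miss $W^*$ (e.g.\ $n=1$).

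Your first patch, however, double counts. The $i=j$ term is $f(i,i,0)+2W_i$. If $f(i,i,0)$ is the best feasible set among the other disks, this term is at least $(W^*-W_i)+2W_i=W^*+W_i$ whenever $D_i\in\calD^*$. So your lower-bound step (``its weight is $f(i,j,0)+W_i+W_j$'') breaks for $i=j$, and the singleton case would require $f(i,i,0)=-W_i$, not $0$. Use your alternative $W^*=\max\bigl(\max_i W_i,\ \max_{i\ne j}(f(i,j,0)+W_i+W_j)\bigr)$, or define $f(i,i,0)=-W_i$.

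A minor point: in the weighted setting, a singleton optimum does not force all disks to intersect pairwise. It only forces every other disk to meet $D_i$, which is all you use. In the paper's application the case never occurs, since the $n\ge 3$ auxiliary points of $B$ lie in every optimum.
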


\begin{proof}
We first prove that $\max_{1\leq i,j\leq n} \bigl(f(i,j,0)+W_i+W_j\bigr)\leq W^*$. To this end, it suffices to show that $f(i,j,0)+W_i+W_j \le W^*$ holds for any pair $(i,j)$ with $1\leq i,j\leq n$. 

If $D_i$ and $D_j$ intersect, then by definition $f(i,j,0)=-(W_i+W_j)$, and thus $f(i,j,0)+W_i+W_j=0\le W^*$, as claimed (note that $W^*\geq 0$ holds as all disk weights are positive). We now assume that $D_i$ and $D_j$ do not intersect. Let $\calD'$ be a maximum-weight subset of $\calD(i,j)$ such that $\calD'\cup\{D_i,D_j\}$ is a strongly convex independent set.
By definition, $f(i,j,0)=W(\calD')$. Since $W^*$ is the total weight of a maximum-weight independent set of $\calD$ and $\calD$ has a maximum-weight independent set that is strongly convex, the total weight of 
$\calD'\cup\{D_i,D_j\}$ is at most $W^*$. Therefore, it follows that
\[
f(i,j,0)+W_i+W_j
= W(\calD')+W_i+W_j
= W\bigl(\calD'\cup\{D_i,D_j\}\bigr)
\le W^*.
\]

To prove the lemma, it remains to show that $W^* \le \max_{1\leq i,j\leq n} \bigl(f(i,j,0)+W_i+W_j\bigr)$, as follows.

Let $S$ be a maximum-weight independent set of $\calD$ that is strongly convex. By definition, $W(S)=W^*$.
Because $S$ is strongly convex, each disk of $S$ has a single arc appearing on $\partial\calh(S)$. Let $D_i,D_j\in S$ be two disks such that their arcs on $\partial \calh(S)$ are neighboring. Since $S$ is strongly convex, by Lemma~\ref{lem:orderconsistent}, the counterclockwise order of the disks of $S$ appearing on $\partial \calh(S)$ is consistent with the index order of $\calD$. Hence, the disks of $S'=S\setminus\{D_i,D_j\}$ are either all in $\calD(i,j)$ or all in $\calD(j,i)$. Without loss of generality, assume the former. Then, $S'$ is a subset of $\calD_k(i,j)$ with $k=0$ such that $S'\cup \{D_i,D_j\}$ is a strongly convex independent set. By the definition of $f(i,j,0)$, it holds that $W(S')\leq f(i,j,0)$. Therefore, we obtain 
\[W^*=W(S)=W(S')+W_i+W_j\leq f(i,j,0)+W_i+W_j\leq \max_{1\leq i,j\leq n} \bigl(f(i,j,0)+W_i+W_j\bigr).\] 

This completes the proof of the lemma. 
\end{proof}

\subsection{Algorithm correctness: Proving Lemma~\ref{lem:relation}}
\label{sec:correct}

We now prove Lemma~\ref{lem:relation}. Consider a canonical triple $(i,j,k)$. Without loss of generality, we assume that the centers $p_i$ and $p_j$ of disks $D_i$ and $D_j$ lie on the same horizontal line, with $p_i$ to the left of $p_j$. This assumption allows us to clearly distinguish the upper and lower (outer) common tangents of $D_i$ and $D_j$.

If $k \ne 0$, then by definition $\{D_i, D_j, D_k\}$ is a strongly convex independent set, and the disk $D_{ijk}$ exists by Lemma~\ref{lem:3IS}. In the special case $k = 0$, we let $D_{ijk}$ denote the upper halfplane bounded by the upper common tangent line of $D_i$ and $D_j$. Accordingly, its center $v_{ijk}$ is interpreted as a point with infinitely large $y$-coordinate and its radius $r_{ijk}$ is taken to be infinite. With this convention, the arguments below apply uniformly to both cases $k = 0$ and $k \ne 0$.

If $\calD'_k(i,j) = \emptyset$, then $f(i,j,k) = 0$ by Corollary~\ref{coro:zerof}, which is proved using Lemma~\ref{lem:exist}.

\begin{lemma}
\label{lem:exist}
For any subset $\calD' \subseteq \calD_k(i,j)$ such that $\calD' \cup \{D_i, D_j\}$ is a strongly convex independent set, the set $\calD'$ must contain a disk $D$ for which $\{D, D_i, D_j\}$ is a strongly convex independent set.
\end{lemma}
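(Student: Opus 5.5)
Independence of $\{D,D_i,D_j\}$ is automatic for every $D\in\calD'$, since it is a subset of the independent set $S=\calD'\cup\{D_i,D_j\}$. So the whole task is to find a $D\in\calD'$ for which the triple is \emph{strongly} convex. I will argue through the Gauss map. For a finite set $X$ of disks and each $D\in X$, let $I_X(D)$ be the set of outward unit normal directions $\theta$ in which $D$ attains the maximum support value $\max_{D'\in X} h_{D'}(\theta)$. Here $h_{D'}(\theta)=\langle p',\theta\rangle + r'$ is the support function of $D'$. Then $X$ is strongly convex iff every $I_X(D)$ is a single nonempty arc of directions. Moreover, for $X\subseteq Y$ we have $I_Y(D)\subseteq I_X(D)$.

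First I fix the cyclic structure. By Lemma~\ref{lem:orderconsistent}, the disks of $S$ appear along $\partial\calh(S)$ in index order. Since $\calD'\subseteq\calD(i,j)$, this order is $D_i,\calD',D_j$ and then back to $D_i$, so $D_j$ and $D_i$ are consecutive on $\partial\calh(S)$. Their connecting edge lies on the upper common tangent $\ell$ of $D_i,D_j$ under the convention of this section: all of $\calD'$ is disjoint from $D_{ijk}$, which for $k=0$ is the upper halfplane. Let $\theta_0$ be the outward normal of $\ell$. All of $\calD'$ is strictly dominated in a neighborhood of $\theta_0$. The disks of $\calD'$ therefore own the directions strictly between $I_S(D_i)$ and $I_S(D_j)$ on the side opposite $\theta_0$, and they do so in index order.

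For the candidate I take $D\in\calD'$ to be the disk that is \emph{extreme in the direction $-\theta_0$} among those of $\calD'$ whose interval $I_S(D)$ is reached first. Concretely, my first attempt is the ccw neighbor of $D_i$ on $\partial\calh(S)$. Two facts then hold for $T=\{D,D_i,D_j\}$. The tangent edges $D_i$--$D$ and $D_j$--$D_i$ of $\partial\calh(S)$ are supporting lines of $\calh(S)\supseteq\calh(T)$ that touch only these disks, so they are also edges of $\partial\calh(T)$. Also, since $I_T\supseteq I_S$, each of $D_i,D,D_j$ contributes at least one arc to $\partial\calh(T)$. Hence the cyclic sequence of arcs on $\partial\calh(T)$ starts $D_i,D,\dots,D_j$, with $D_j$ followed by $D_i$. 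Strong convexity fails only if an extra arc appears in the part between $D$ and $D_j$: either $D_i$ reappears, or $D$ appears twice around a $D_j$ arc. Patterns of the form $a\,b\,a\,b$ are impossible, because for any two disks each of their Gauss-map intervals in the pair hull is a single arc. That pair property excludes the second case and most of the first.

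The main obstacle is the remaining pattern $D_i,D,D_i,D_j$, where $D_i$ reappears between $D$ and $D_j$. Using the neighbor of $D_i$ alone does not rule it out: a large $D_i$ could bulge past a small neighbor $D$. My plan for this case is an exchange argument. If $D_i$ wins some direction $\theta$ between $I_T(D)$ and $I_T(D_j)$ in $T$, then in $S$ that direction is owned by some $D'\in\calD'$ later in the order, since $I_S(D_i)$ is a single arc and $\theta\notin I_S(D_i)$. I replace $D$ by $D'$ and repeat. Each replacement moves strictly ccw in index order, so the process terminates.

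Then I must check that the new triple still has the $D_j$--$D_i$ edge and a single $D_i$ arc before the new $D$. For this I would use $h_{D'}(\theta)>h_{D_i}(\theta)$ together with the monotone way support-function differences of two disks cross zero, at most twice. Equivalently, I would choose $D$ from the start as the disk of $\calD'$ maximizing $h_D(\theta)-h_{D_i}(\theta)$ over the relevant directions. Verifying that this final choice makes all three Gauss intervals connected is the step I expect to require the most care.
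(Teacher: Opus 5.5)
\textbf{There is a genuine gap, at exactly the step you flag: you never fix a disk $D$ and prove that $\{D,D_i,D_j\}$ is strongly convex.} The surrounding framework is fine. Independence is inherited from $S$, and the $abab$ exclusion is correct. The $D_j$--$D_i$ edge of $\partial\calh(S)$ does lie on the upper tangent $\ell$, though this follows from the counterclockwise orientation with $p_i$ left of $p_j$, not from disjointness with $D_{ij0}$, which is not part of the definition of $\calD_0(i,j)$.

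Both of your stated candidates can fail. The ccw neighbor of $D_i$ fails, as you note. So can the disk extreme in direction $-\theta_0$: when $r_i\neq r_j$, $-\theta_0$ is not the outward normal of the lower tangent. For example, take $D_i$ centered at the origin with radius $10$, $D_j$ at $(40,0)$ with radius $1$, a tiny $D$ at $(-5,-10)$ and a tiny $D'$ at $(20,-10)$. Then $D$ is both the neighbor and the extreme disk in direction $-\theta_0$, but it stays above the lower tangent and $\{D_i,D,D_j\}$ has pattern $D_iDD_iD_j$. Your exchange procedure is only sketched, with its key verification deferred.

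\textbf{The missing idea is the lower outer common tangent $\ell_2$ of $D_i$ and $D_j$, with outward normal $\theta_2$.} The function $h_{D_i}-h_{D_j}$ vanishes exactly at $\theta_0$ and $\theta_2$ and is positive on the ccw arc $(\theta_0,\theta_2)$.

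First, $\ell_2$ cannot support $\calh(S)$. If it did, $I_S(D_i)$ and $I_S(D_j)$ would be single arcs each containing $\theta_0$ and $\theta_2$. Hence they would equal $[\theta_0,\theta_2]$ and $[\theta_2,\theta_0]$, leaving no direction for the nonempty set $\calD'$.

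Therefore the disk $D\in\calD'$ owning $\theta_2$ in $S$ satisfies $h_D(\theta_2)>h_{D_i}(\theta_2)=h_{D_j}(\theta_2)$, i.e.\ it pokes strictly below $\ell_2$. It also satisfies $h_D(\theta_0)<h_{D_i}(\theta_0)=h_{D_j}(\theta_0)$. So the open arcs $\{h_D>h_{D_i}\}$ and $\{h_D>h_{D_j}\}$ both contain $\theta_2$ and miss $\theta_0$. Cutting the circle at $\theta_0$ turns them into intervals. Then $D$ owns their intersection, which is one nonempty interval. $D_i$ owns $(\theta_0,\theta_2)$ minus an interval containing $\theta_2$, which is again a single interval, and symmetrically for $D_j$.

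Combinatorially, the only non-strongly-convex patterns compatible with the $D_j\to D_i$ edge on $\ell$ are $D_iDD_iD_j$ and $D_iD_jDD_j$. Both require a ccw $D_i\to D_j$ edge, which can only lie on $\ell_2$ and is impossible once $D$ crosses $\ell_2$. This is precisely the paper's proof: some disk of $\calD'$ has a point strictly below the lower common tangent, and such a disk yields a strongly convex triple.

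Your exchange can in fact be completed. Every disk it produces owns a direction in $(\theta_0,\theta_2)$, which excludes $D_iD_jDD_j$, so the process halts at a strongly convex triple. However, that argument is not in your proposal, and choosing $D$ directly via $\theta_2$ makes it unnecessary.
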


\begin{proof}
Let $S = \calD' \cup \{D_i, D_j\}$. Since $S$ is strongly convex, by Lemma~\ref{lem:orderconsistent}, the counterclockwise order of the disks of $S$ along $\partial \calh(S)$ is consistent with the index order of $\calD$. Because $\calD' \subseteq \calD_k(i,j) \subseteq \calD(i,j)$, the arcs of $D_i$ and $D_j$ on $\partial \calh(S)$ must be adjacent; that is, they are connected by a line segment on $\partial \calh(S)$, and this segment is a common tangent of $D_i$ and $D_j$. Since we have assumed that the centers $p_i$ and $p_j$ lie on a horizontal line, this segment must be the upper common tangent of $D_i$ and $D_j$.

\begin{figure}[t]
\begin{minipage}{\textwidth}
\begin{center}
\includegraphics[height=1.1in]{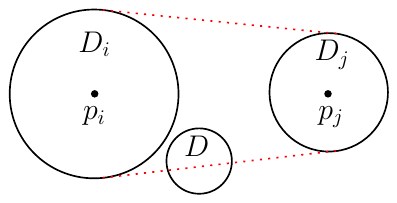}
\caption{\footnotesize Illustrating the proof of Lemma~\ref{lem:exist}.}
\label{fig:exist}
\end{center}
\end{minipage}
\vspace{-0.1in}
\end{figure}

The above argument shows that the upper common tangent of $D_i$ and $D_j$ must be an edge of $\calh(S)$. Since $S$ is strongly convex, the lower common tangent of $D_i$ and $D_j$ cannot be an edge of $\calh(S)$ (since otherwise it would not be possible that each disk of $S$ has a single arc appearing on $\partial \calh(S)$, contradicting that $S$ is strongly convex). This implies that at least one disk $D \in \calD'$ must have a point lying strictly below the supporting line of the lower common tangent of $D_i$ and $D_j$ (see Figure~\ref{fig:exist}). Consequently, the set $\{D, D_i, D_j\}$ is strongly convex. Since $S$ is an independent set, $\{D, D_i, D_j\}$ is also an independent set, completing the proof.
\end{proof}

\begin{corollary}
\label{coro:zerof}
If $\calD'_k(i,j) = \emptyset$, then $\calD_k(i,j)$ does not contain any subset $\calD'$ such that $\calD' \cup \{D_i, D_j\}$ is a strongly convex independent set. Consequently, $f(i,j,k) = 0$.
\end{corollary}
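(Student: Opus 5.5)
The corollary is the contrapositive of Lemma~\ref{lem:exist}, combined with the definition of $f(i,j,k)$. I would argue by contradiction. Suppose $\calD'_k(i,j)=\emptyset$, but some subset $\calD'\subseteq\calD_k(i,j)$ makes $\calD'\cup\{D_i,D_j\}$ a strongly convex independent set.

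By Lemma~\ref{lem:exist}, $\calD'$ contains a disk $D$ for which $\{D,D_i,D_j\}$ is a strongly convex independent set. Since $D\in\calD'\subseteq\calD_k(i,j)$, the disk $D$ belongs to $\calD_k(i,j)$. Also $\{D,D_i,D_j\}$ is strongly convex. These are exactly the two conditions defining $\calD'_k(i,j)$, so $D\in\calD'_k(i,j)$, contradicting $\calD'_k(i,j)=\emptyset$.

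The one point needing care is whether the empty set $\calD'=\emptyset$ could count as a witness. Lemma~\ref{lem:exist} would then demand a disk in an empty set, which is impossible. So I would read both Lemma~\ref{lem:exist} and the corollary as referring to nonempty subsets $\calD'$. This matches the convention in the definition of $f$: when no nonempty feasible subset exists, $f(i,j,k)=0$, and $W(\emptyset)=0$ gives the same value anyway.

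With the first claim established, the only feasible choice in the definition of $f(i,j,k)$ is the empty set (or there is no feasible choice at all). In both cases $f(i,j,k)=0$ by definition. I expect no real obstacle here. The only subtlety is this empty-set convention, which I would settle by stating explicitly that $\calD'$ is nonempty.
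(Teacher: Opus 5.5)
Your proposal is correct and matches the paper's proof. Both assume a feasible $\calD'$ exists, apply Lemma~\ref{lem:exist} to obtain a disk $D\in\calD'\subseteq\calD_k(i,j)$ with $\{D,D_i,D_j\}$ a strongly convex independent set, and conclude $D\in\calD'_k(i,j)$, a contradiction. Your empty-set caveat is a real refinement that the paper omits: since $D_i$ and $D_j$ are disjoint, $\{D_i,D_j\}$ is itself strongly convex and independent, so Lemma~\ref{lem:exist} and the first claim hold only for nonempty $\calD'$. As you say, this does not change $f(i,j,k)=0$.
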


\begin{proof}
Suppose, for contradiction, that $\calD_k(i,j)$ contains a subset $\calD'$ such that $\calD' \cup \{D_i, D_j\}$ is a strongly convex independent set. By Lemma~\ref{lem:exist}, the set $\calD'$ must contain a disk $D$ such that $\{D, D_i, D_j\}$ is a strongly convex independent set. Since $\calD' \subseteq \calD_k(i,j)$, it follows that $\calD_k(i,j)$ contains such a disk $D$, contradicting the assumption that $\calD'_k(i,j) = \emptyset$. The corollary follows.
\end{proof}

In the following, we assume that $\calD'_k(i,j) \neq \emptyset$. We will prove Lemma~\ref{lem:relation} in two parts: a \emph{forward} direction and a \emph{backward} direction.

Let $S$ be an optimal solution defining $f(i,j,k)$; that is, $S \subseteq \calD_k(i,j)$, the set $S \cup \{D_i, D_j\}$ is a strongly convex independent set, and $f(i,j,k) = W(S)$. In the forward direction, we show that $S$ contains a disk $D_l \in \calD'_k(i,j)$ such that $S$ can be partitioned as $S = S_1 \cup S_2 \cup \{D_l\}$, where $S_1 \subseteq \calD_j(i,l)$ and $S_2 \subseteq \calD_i(l,j)$, and both $S_1 \cup \{D_i, D_l\}$ and $S_2 \cup \{D_l, D_j\}$ are strongly convex independent sets. This implies that $f(i,j,k) \le \max_{D_l \in \calD'_k(i,j)} \bigl( f(i,l,j) + f(l,j,i) + W_l \bigr)$.

In the backward direction, we show that for any disk $D_l \in \calD'_k(i,j)$ and any sets $S_1 \subseteq \calD_j(i,l)$ and $S_2 \subseteq \calD_i(l,j)$ such that both $S_1 \cup \{D_i, D_l\}$ and $S_2 \cup \{D_l, D_j\}$ are strongly convex independent sets, the set $S_1 \cup S_2 \cup \{D_l\}$ is contained in $\calD_k(i,j)$ and $S_1 \cup S_2 \cup \{D_l, D_i, D_j\}$ is a strongly convex independent set. This implies that $f(i,j,k) \ge \max_{D_l \in \calD'_k(i,j)} \bigl( f(i,l,j) + f(l,j,i) + W_l \bigr)$.

Together, the two directions establish Lemma~\ref{lem:relation}.

\subsubsection{The forward direction}

Let $S$ be an optimal solution defining $f(i,j,k)$; that is, $S \subseteq \calD_k(i,j)$, the set $S \cup \{D_i, D_j\}$ is a strongly convex independent set, and $f(i,j,k) = W(S)$.

Let $\gamma_{ij}$ denote the \emph{bisector} of $D_i$ and $D_j$, which consists of all points $q$ that are equidistant from $D_i$ and $D_j$. It is well known that $\gamma_{ij}$ is a branch of a hyperbola (degenerating to a line when $r_i = r_j$)~\cite{ref:SharirIn85}. Moreover, since we assume that the centers of $D_i$ and $D_j$ lie on a horizontal line, $\gamma_{ij}$ is $y$-monotone.

\paragraph{A total order $\boldsymbol{\prec}$ on $\boldsymbol{\gamma_{ij}}$.}
For any two points $a,b \in \gamma_{ij}$, we define a total order $a \prec b$ if $a$ has a smaller $y$-coordinate than $b$.
The following lemma will be frequently used.

\begin{figure}[t]
\begin{minipage}[h]{\textwidth}
\begin{center}
\includegraphics[height=1.9in]{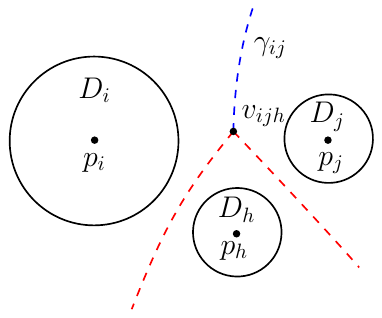}
\caption{\footnotesize Illustrating Lemma~\ref{lem:preorder}. The three dashed curves (two red and one blue) are edges of Voronoi diagram of the three disks. The blue curve belongs to $\gamma_{ij}$.}
\label{fig:voronoi}
\end{center}
\end{minipage}
\vspace{-0.1in}
\end{figure}

\begin{lemma}
\label{lem:preorder}
Suppose $D_h \in \calD$ is a disk such that $\{D_h, D_i, D_j\}$ is a strongly convex independent set, and let $v$ be a point on $\gamma_{ij}$. If $D_h \in \calD(i,j)$, then $v_{ijh} \prec v$ if and only if $d_{p_i}(v) < d_{p_h}(v)$; see Figure~\ref{fig:voronoi}. If $D_h \in \calD(j,i)$, then $v \prec v_{ijh}$ if and only if $d_{p_i}(v) < d_{p_h}(v)$.
\end{lemma}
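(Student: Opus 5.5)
We study the function $g(v)=d_{p_h}(v)-d_{p_i}(v)$ restricted to the bisector $\gamma_{ij}$. On $\gamma_{ij}$ we have $d_{p_i}(v)=d_{p_j}(v)$, so $g(v)=0$ exactly when $v$ is equidistant from $D_h$, $D_i$ and $D_j$. By Lemma~\ref{lem:3IS} the only such point is $v_{ijh}$, so $g$ has exactly one zero on $\gamma_{ij}$, namely at $v_{ijh}$.

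Since $\gamma_{ij}$ is a connected $y$-monotone curve and $g$ is continuous, $g$ has constant sign on each of the two open pieces $\{v\prec v_{ijh}\}$ and $\{v_{ijh}\prec v\}$. This uses the intermediate value theorem along $\gamma_{ij}$, parametrized by $y$. We also need the signs on the two pieces to be opposite, i.e., that $v_{ijh}$ is a genuine crossing and not a tangency. Two facts give this.
\begin{itemize}
\item By the proof of Lemma~\ref{lem:3IS}, the pairs $(i,h)$ and $(j,h)$ each define a Voronoi edge of $\vd(\{D_h,D_i,D_j\})$ incident to $v_{ijh}$. Likewise, part of $\gamma_{ij}$ near $v_{ijh}$ is the $(i,j)$ Voronoi edge, on which $g>0$.
\item It then suffices to find one point of $\gamma_{ij}$ on each side of $v_{ijh}$ and compute the sign of $g$ there.
\end{itemize}

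The main step is to identify which end of $\gamma_{ij}$ carries $g<0$, that is, where $D_h$ is closer than $D_i$. Take $D_h\in\calD(i,j)$. The argument in Lemma~\ref{lem:exist} shows that $D_h$ has a point strictly below the line supporting the lower common tangent of $D_i$ and $D_j$, while the upper tangent is a hull edge of $\{D_h,D_i,D_j\}$.

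\emph{Upper end.} Points $v\in\gamma_{ij}$ with $y\to+\infty$ approach the region beyond the upper tangent. For such $v$, the difference $d_{p_h}(v)-d_{p_i}(v)$ tends to the difference of the support-function values in the upward normal direction. Since the upper tangent supports $D_i$ and $D_j$ but not $D_h$, $D_h$ lies strictly below that line, so $g>0$ at this end.

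\emph{Lower end.} As $y\to-\infty$, the normal direction tends to the downward normal of the lower tangent. $D_h$ pokes out beyond that tangent, so $g<0$ at this end.

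These asymptotic claims are the delicate part. I would make them precise by writing $v=c+t\,u$ for a unit direction $u$ and using
\[
|p v|-r=t-\langle p,u\rangle-r+o(1).
\]
Here $\langle p,u\rangle+r$ is the support function of the disk. I would also check that the asymptotic directions of the hyperbola branch $\gamma_{ij}$ are exactly the outward normals of the two outer common tangents.

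Combining these, for $D_h\in\calD(i,j)$ we get $g<0$ for $v\prec v_{ijh}$ and $g>0$ for $v_{ijh}\prec v$. Hence $d_{p_i}(v)<d_{p_h}(v)$ if and only if $v_{ijh}\prec v$, which is the first claim.

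For $D_h\in\calD(j,i)$ the roles of the tangents are swapped. The lower common tangent of $D_i$ and $D_j$ is now the hull edge, and $D_h$ pokes above the upper tangent. The signs at the two ends of $\gamma_{ij}$ reverse, which gives $v\prec v_{ijh}$ if and only if $d_{p_i}(v)<d_{p_h}(v)$.
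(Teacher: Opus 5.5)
Your proof is correct, but it takes a different route from the paper's.

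\textbf{The paper's route.} The paper argues structurally. It takes the diagram $\vd(\{D_h,D_i,D_j\})$ and uses Lemma~\ref{lem:3IS} to get its single vertex $v_{ijh}$. From the counterclockwise order $D_i,D_h,D_j$ on the hull, it then asserts that the part of $\gamma_{ij}$ above $v_{ijh}$ is the $(i,j)$ Voronoi edge and the part below lies in the cell of $D_h$. Deciding which half is which is left implicitly to standard Voronoi facts: unbounded edges correspond to hull edges.

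\textbf{Your route.} You reduce the statement to the sign of $g=d_{p_h}-d_{p_i}$ on $\gamma_{ij}$. Uniqueness of the zero (Lemma~\ref{lem:3IS}) gives constant sign on each side of $v_{ijh}$. You then read off the two signs from the limits $g(v)\to h_{D_i}(u)-h_{D_h}(u)$ at the two ends, where $h_D(u)=\langle p,u\rangle+r$ and $u$ is the outward normal of the upper or lower outer common tangent. These limits hold for the following reasons.
\begin{itemize}
\item Convergence of $v/|v|$ alone gives $|pv|-|v|\to-\langle p,u\rangle$.
\item Along $\gamma_{ij}$ the relation $|p_iv|-|p_jv|=r_i-r_j$ forces every limiting direction to satisfy $h_{D_i}(u)=h_{D_j}(u)$. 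This equation characterizes exactly the two outer-tangent normals. Since $|r_i-r_j|<|p_ip_j|$, one of them has positive and one negative $y$-component, so the direction does converge at each end.
\end{itemize}
The combinatorial input is the same as in Lemma~\ref{lem:exist}: which tangent is a hull edge, and that $D_h$ pokes strictly past the other one.

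\textbf{What each buys.} The paper's version is shorter but leans on unproved Voronoi folklore. Yours proves that folklore in this special case and is self-contained, at the cost of a short asymptotic computation.

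\textbf{Two small remarks.}
\begin{itemize}
\item Your first bullet, about ruling out a tangency at $v_{ijh}$, is redundant. Opposite strict limiting signs at the two ends already force opposite signs on the two pieces.
\item At the upper end you need $D_h$ to lie strictly below the upper tangent. That strictness uses the general position assumption that no line is tangent to three disks.
\end{itemize}
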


\begin{proof}
We prove only the case where $D_h \in \calD(i,j)$; the other case can be handled analogously.

Let $\calD' = \{D_h, D_i, D_j\}$. Consider the additively-weighted Voronoi diagram $\vd(\calD')$ of the three weighted centers of $\calD'$. By Lemma~\ref{lem:3IS}, the diagram has a unique Voronoi vertex, namely $v_{ijh}$. Note that $v_{ijh}\in \gamma_{ij}$. Since $\calD'$ is strongly convex, Lemma~\ref{lem:orderconsistent} implies that $D_i$, $D_h$, and $D_j$ appear on $\partial \calh(\calD')$ in counterclockwise order.

Consequently, the portion of the bisector $\gamma_{ij}$ above $v_{ijh}$ belongs to the common boundary between the Voronoi cells of $p_i$ and $p_j$, while the portion of $\gamma_{ij}$ below $v_{ijh}$ lies entirely within the Voronoi cell of $p_h$. Therefore, for any point $v \in \gamma_{ij}$, we have $v_{ijh} \prec v$ if and only if $d_{p_i}(v) < d_{p_h}(v)$.
\end{proof}

By Lemma~\ref{lem:exist}, the set $S$ contains at least one disk $D$ such that $\{D, D_i, D_j\}$ is strongly convex. For each disk $D_h \in S$ with this property, the point $v_{ijh}$ exists by Lemma~\ref{lem:3IS}. Among all such disks $D_h \in S$, let $D_l$ be the one for which $v_{ijl}$ is largest under the order $\prec$. Due to the general position assumption, $D_l$ is unique. Define $S_1 = S \cap \calD(i,l)$ and $S_2 = S \cap \calD(l,j)$. Since $S \subseteq \calD_k(i,j)$, the sets $S_1$, $S_2$, and $\{D_l\}$ form a partition of $S$. By definition, $D_l \in \calD'_k(i,j)$.

To complete the proof of the forward direction, it remains to show that $S_1 \subseteq \calD_j(i,l)$, $S_2 \subseteq \calD_i(l,j)$, and that both $S_1 \cup \{D_i, D_l\}$ and $S_2 \cup \{D_l, D_j\}$ are strongly convex independent sets. These statements are proved in the next two lemmas.

\begin{lemma}\label{lem:notintersect}
    $S_1\subseteq \calD_j(i,l)$ and $S_2\subseteq \calD_i(l,j)$. 
\end{lemma}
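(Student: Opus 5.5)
The membership part is immediate. By definition $S_1 = S\cap\calD(i,l)$, and every disk of $S$ is disjoint from $D_i$, $D_j$ and $D_l$ because $S\cup\{D_i,D_j\}$ is an independent set containing $D_l$. Also $\{D_i,D_l,D_j\}$ is a strongly convex independent set, so $v_{ilj}=v_{ijl}$ and $D_{ilj}=D_{ijl}$. By Lemma~\ref{lem:3IS} this is the unique point equidistant from the three disks. So the whole lemma reduces to one claim: every disk $D\in S\setminus\{D_l\}$ is disjoint from $D_{ijl}$, i.e., $d_p(v_{ijl})>r_{ijl}$, where $p$ is the weighted center of $D$. The same claim gives both $S_1\subseteq\calD_j(i,l)$ and $S_2\subseteq\calD_i(l,j)$.

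\emph{Easy case.} Suppose $\{D,D_i,D_j\}$ is strongly convex. By the choice of $D_l$ we have $v_{ijD}\prec v_{ijl}$. Apply Lemma~\ref{lem:preorder} with $v=v_{ijl}\in\gamma_{ij}$, using that $D\in\calD(i,j)$. It gives $d_{p_i}(v_{ijl})<d_p(v_{ijl})$, and hence $d_p(v_{ijl})>r_{ijl}$.

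\emph{General case.} This needs a global argument; I plan to use the additively weighted Voronoi diagram of $T=S\cup\{D_i,D_j\}$, viewed as a Voronoi diagram of disks.
\begin{enumerate}
\item Since $T$ is strongly convex and independent, every Voronoi cell is unbounded, by the ray argument from the proof of Lemma~\ref{lem:3IS}. Every Voronoi vertex is a point equidistant from three disks whose tangent disk contains no other disk of $T$.
\item By Lemma~\ref{lem:orderconsistent}, the arcs of $D_i$ and $D_j$ are consecutive on $\partial\calh(T)$, joined by the upper common tangent (as shown in Lemma~\ref{lem:exist}). Hence the cells of $D_i$ and $D_j$ share an unbounded edge on $\gamma_{ij}$ going upward.
\item Because $|T|\ge 3$, this edge must end, going downward, at a Voronoi vertex $v\in\gamma_{ij}$. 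The vertex $v$ is equidistant from $D_i$, $D_j$ and some $D_h\in S$, and its tangent disk meets no disk of $T$ in its interior.
\item Show that $\{D_h,D_i,D_j\}$ is strongly convex. Then $v=v_{ijh}$ by uniqueness in Lemma~\ref{lem:3IS}.
\item For every other $D_{h'}\in S$ with $\{D_{h'},D_i,D_j\}$ strongly convex, show that $v_{ijh'}\prec v$. The portion of $\gamma_{ij}$ above $v$ lies in the $D_i$/$D_j$ cells, so $d_{p_i}<d_{p_{h'}}$ there; Lemma~\ref{lem:preorder} then forces $v_{ijh'}$ to lie below $v$.
\item Steps 4 and 5 give $h=l$, so $v=v_{ijl}$. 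The emptiness of the Voronoi vertex then says every $D\in S\setminus\{D_l\}$ satisfies $d_p(v_{ijl})\ge r_{ijl}$. Equality is excluded by the general position assumption (no point is equidistant from four disks), which yields the claim.
\end{enumerate}

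The main obstacle is step 4: showing that the third disk defining the lower endpoint of the $D_i$–$D_j$ Voronoi edge forms a strongly convex triple with $D_i$ and $D_j$. I would argue it as follows. The three Voronoi edges at $v$ are unbounded branches separating cells that are unbounded, which reflects hull-adjacency. The triple's diagram $\vd(\{D_i,D_j,D_h\})$ locally agrees with the full diagram near $v$, so it has a vertex, and hence exactly one vertex, with all three pairwise edges present. I then need to show that this forces each pair to be adjacent on $\partial\calh(\{D_i,D_j,D_h\})$, i.e., the triple is strongly convex. If this converse of the argument in Lemma~\ref{lem:3IS} proves delicate, the fallback is a direct geometric argument: if $D_h$ lay entirely above the lower common tangent of $D_i$ and $D_j$, then the empty tangent disk at $v$ could be pushed further down $\gamma_{ij}$, contradicting that $v$ is the endpoint of the edge.
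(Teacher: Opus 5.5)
Your argument is correct, but it is organized differently from the paper's.

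\textbf{The paper's route.} It works disk by disk: for $D_h\in S_1$ it splits on whether $\{D_h,D_i,D_j\}$ is strongly convex. The first case is exactly your easy case. In the second case it locates $D_h$ geometrically: to the left of $D_i$ or to the right of $D_j$, between the two common tangents of $D_i$ and $D_j$. It then concludes that in the three-disk diagram all of $\gamma_{ij}$ separates the cells of $D_i$ and $D_j$, so $d_{p_i}<d_{p_h}$ everywhere on $\gamma_{ij}$.

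\textbf{Your route.} You characterize $v_{ijl}$ globally, as the lower endpoint of the unbounded $D_i$--$D_j$ edge of $\vd(S\cup\{D_i,D_j\})$. The empty-circle property then gives disjointness from $D_{ijl}$ for all of $S\setminus\{D_l\}$ at once. This makes your easy case redundant and gives a clean interpretation of the choice of $D_l$. The price is that you use the correspondence between unbounded Voronoi edges and hull edges for a many-disk diagram, whereas the paper never leaves three-disk diagrams. The essential geometric content is the same, though: your step~4 (a point of $\gamma_{ij}$ equidistant from $D_i,D_j,D_h$ forces $\{D_h,D_i,D_j\}$ to be strongly convex) is the contrapositive of a weak form of what the paper's second case proves.

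\textbf{Closing step 4.} Your first route goes through without difficulty.
By the proof of Lemma~\ref{lem:3IS}, $v$ is the only vertex of the three-disk diagram. The three edges at $v$ are pieces of hyperbola branches, which cannot return to $v$, so all three run to infinity. Hence every pair of the triple is adjacent along its hull. But a convex triple with more than three hull arcs has cyclic arc sequence of the form $(a,b,a,c)$, because consecutive arcs come from different disks and no $abab$ pattern occurs. In that sequence $b$ and $c$ are not adjacent, a contradiction. The ``push the tangent disk down'' fallback is only heuristic and is not needed.

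\textbf{A small fix in step 3.} $|T|\ge 3$ alone does not force the $D_i$--$D_j$ edge to end at a vertex. You need, as in the proof of Lemma~\ref{lem:exist}, that strong convexity of $T$ prevents the lower common tangent of $D_i$ and $D_j$ from being a hull edge. Then that Voronoi edge cannot also be unbounded downward.
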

\begin{proof}
We only prove $S_1\subseteq \calD_j(i,l)$; the other case can be proved analogously. 

Consider a disk $D_h\in S_1$. Our goal is to prove $D_h\in \calD_j(i,l)$. To this end, we need to show that $D_h$ is disjoint from the three disks $D_i$, $D_l$, and $D_{ijl}$. Since $S_1\subseteq S$, $D_l\in S$, and $S\cup \{D_i,D_j\}$ is an independent set, we know that $D_h$ is disjoint from both $D_i$ and $D_l$. Hence, it remains to prove that $D_h$ is disjoint from $D_{ijl}$, or equivalently, $d_{p_h}(v_{ijl})>r_{ijl}$; recall that $r_{ijl}$ denotes the radius of $D_{ijl}$ and $r_{ijl}=d_{p_i}(v_{ijl})$. Hence, it suffices to show that $d_{p_h}(v_{ijl})>d_{p_i}(v_{ijl})$. Let $S'=\{D_h,D_i,D_j\}$. 
Depending on whether $S'$ is strongly convex, we distinguish two cases. 

If $S'$ is strongly convex, then by Lemma~\ref{lem:3IS}, $v_{ijh}$ exists. 
By the definition of $D_l$, we have $v_{ijh}\prec v_{ijl}$. Consequently, applying Lemma~\ref{lem:preorder} (with $v=v_{ijl}$) obtains 
$d_{p_i}(v_{ijl})<d_{p_h}(v_{ijl})$.

\begin{figure}[t]
\begin{minipage}[h]{\textwidth}
\begin{center}
\includegraphics[height=1.7in]{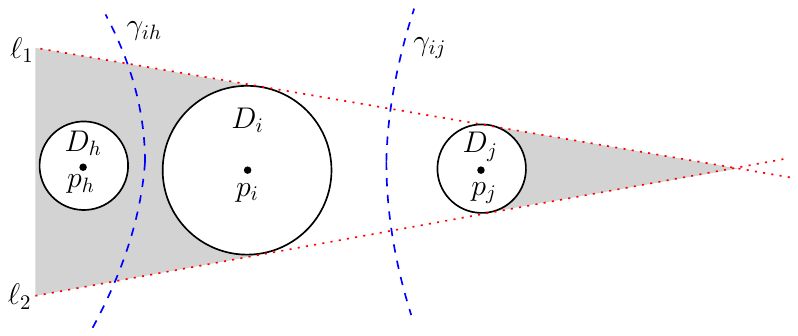}
\caption{\footnotesize The disk $D_h$ can only be in one of the two shaded regions. The curve $\gamma_{ij}$ is the bisector of $D_i$ and $D_j$, and $\gamma_{ih}$ is the bisector of $D_i$ and $D_h$. The two bisectors are the only two edges of the Voronoi diagram of the three disks.}
\label{fig:notintersect}
\end{center}
\end{minipage}
\vspace{-0.1in}
\end{figure}

We now consider the case in which $S'$ is not strongly convex. First of all, since $\calD$ is strongly convex and $S'\subseteq \calD$, $S'$ must be convex. Since $S\cup \{D_i,D_j\}$ is strongly convex and $S\subseteq \calD(i,j)$, as argued in the proof of Lemma~\ref{lem:exist}, the upper common tangent of $D_i$ and $D_j$ must be on $\partial \calh(S\cup \{D_i,D_j\})$. Since $S'\subseteq S\cup \{D_i,D_j\}$, the upper common tangent of $D_i$ and $D_j$ must be on $\partial \calh(S')$. Let $\ell_1$ and $\ell_2$ be the upper and lower common tangent lines of $D_i$ and $D_j$, respectively. 

Since $S'$ is independent and convex but not strongly convex, $D_h$ must be either in the region to the left of $D_i$, below $\ell_1$, and above $\ell_2$, or in the region to the right of $D_j$, below $\ell_1$, and above $\ell_2$ (see Figure~\ref{fig:notintersect}). In either case, if we consider the Voronoi diagram $\vd(S')$ of the disks of $S'$, then the bisector $\gamma_{ij}$ is a common edge of the Voronoi cells of $D_i$ and $D_j$. This implies that for any point $v\in \gamma_{ij}$, it holds that $d_{p_i}(v)<d_{p_h}(v)$. Since $v_{ijl}\in \gamma_{ij}$, we obtain $d_{p_i}(v_{ijl})<d_{p_h}(v_{ijl})$.
\end{proof}

\begin{lemma}\label{lem:pocket}
Both $S_1\cup\{D_i,D_l\}$ and $S_2\cup\{D_l,D_j\}$ are strongly convex independent sets.
\end{lemma}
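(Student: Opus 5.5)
Independence is immediate: both sets are subsets of $S\cup\{D_i,D_j\}$, which is independent. So the work is to show that $S_1\cup\{D_i,D_l\}$ is strongly convex; the claim for $S_2\cup\{D_l,D_j\}$ is symmetric. Write $T=S\cup\{D_i,D_j\}$ and $T_1=S_1\cup\{D_i,D_l\}$. By Lemma~\ref{lem:orderconsistent}, the disks of $T$ appear on $\partial\calh(T)$ in the counterclockwise order $D_i$, then $S_1$, then $D_l$, then $S_2$, then $D_j$. Each has a single arc there, and the upper common tangent of $D_i$ and $D_j$ closes the chain.

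First I would show that the chain of $\partial\calh(T)$ from the arc of $D_i$ to the arc of $D_l$ survives in $\partial\calh(T_1)$. Every arc and tangent segment on this chain lies on $\partial\calh(T)$, which is the boundary of a superset of $\calh(T_1)$, and every disk contributing to the chain belongs to $T_1$. Hence the chain is also on $\partial\calh(T_1)$. Therefore each disk of $S_1$ has its single $T$-arc appearing on $\partial\calh(T_1)$, and $D_i$ and $D_l$ each have at least one arc there.

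It remains to rule out additional arcs. The rest of $\partial\calh(T_1)$ is obtained by closing the chain from $D_l$ back to $D_i$. I would argue that this closing part is exactly the common tangent segment of $D_l$ and $D_i$ on the side facing $D_j$, namely the tangent that separates the chain from $D_j$. Suppose instead that the closing part touched some disk $D_h\in S_1$ a second time, or touched $D_i$ or $D_l$ in a second arc. Then some disk of $T_1$ would poke out beyond the line through that tangent segment, into the region that in $\calh(T)$ is enclosed by the chain $D_l,S_2,D_j$ and the upper tangent of $D_i$ and $D_j$.

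To exclude this I would use the choice of $D_l$ as the disk maximizing $v_{ijl}$ in $\prec$, together with Lemma~\ref{lem:notintersect}: every $D_h\in S_1$ is disjoint from $D_{ijl}$. The disk $D_{ijl}$ is tangent to $D_i$, $D_j$ and $D_l$, and since $(i,l,j)$ is in counterclockwise order it sits in the pocket between them, on the $D_j$ side of the tangent joining $D_i$ and $D_l$. A disk of $S_1$ crossing that tangent line far enough to create a second hull arc would either have to intersect $D_{ijl}$ or leave $\calh(T)$ through the upper tangent of $D_i$ and $D_j$; both are impossible. I would also use that $T$ itself is strongly convex, so $D_i$ and $D_l$ have single arcs in $T$, and that removing disks can only shrink the hull.

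Once this is done, every disk of $T_1$ has exactly one arc on $\partial\calh(T_1)$, which is strong convexity.

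The main obstacle is making the pocket argument rigorous. I expect the case where $\{D_i,D_l\}$ plus a disk of $S_1$ fails to be strongly convex to need the same kind of region analysis as in the proof of Lemma~\ref{lem:notintersect}. If the direct geometric argument becomes messy, the fallback is to reason through the Voronoi diagram of $\{D_i,D_l,D_j,D_h\}$. There, disjointness from $D_{ijl}$ means $d_{p_h}(v_{ijl})>d_{p_i}(v_{ijl})$, so $v_{ijl}$ remains a Voronoi vertex. An additional arc would make a Voronoi cell of $D_i$ or $D_l$ disconnected or closed, contradicting the unique-vertex structure given by Lemma~\ref{lem:3IS}.
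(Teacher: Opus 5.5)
\textbf{Verdict: genuine gap.} The step that carries the lemma is only asserted, and the geometric picture offered for it is wrong.

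\textbf{What is right.} Your preliminary steps are sound and match the paper. Independence is inherited from $T$. The part of $\partial\calh(T)$ running from $a_i\in\partial D_i$ through $S_1$ to $b_l\in\partial D_l$ stays on $\partial\calh(T_1)$. The boundary should close up along the common tangent $\overline{ab}$ of $D_i$ and $D_l$ on the $D_j$ side, with $a\in\partial D_i$ and $b\in\partial D_l$. The input is Lemma~\ref{lem:notintersect}.

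\textbf{The gap.} You need every disk of $S_1$ to lie in the region $R_1'$ bounded by the chain, the arc of $\partial D_l$ from $b_l$ to $b$, the segment $\overline{ba}$, and the arc of $\partial D_i$ from $a$ to $a_i$. You assert this without proof, and your justification misplaces $D_{ijl}$.
\begin{itemize}
\item $D_{ijl}$ cannot lie on the $D_j$ side of the line through $\overline{ab}$. A disk in that closed halfplane meets $D_i$ only at $a$ and $D_l$ only at $b$. It touches that line in at most one point, and $a\ne b$, so it cannot be tangent to both $D_i$ and $D_l$.
\item The tangency points $c_i\in\partial D_i$ and $c_l\in\partial D_l$ of $D_{ijl}$ in fact lie strictly on the chain's side. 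Neither lies on $\partial\calh(\{D_i,D_l\})$. A point on the outward normal ray through a hull point of $D_l$ in $\calh(\{D_i,D_l\})$ is strictly closer to $D_l$ than to $D_i$, whereas $v_{ijl}$ is equidistant from both.
\item $D_{ijl}$ is not in general a small disk nestled between the three disks. Its center can be far away, even on the chain's side of that line.
\end{itemize}

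\textbf{What the paper does instead.} The fact about $c_i$ and $c_l$ above is exactly what the paper's proof hinges on.
\begin{itemize}
\item The arc of $\partial D_{ijl}$ between $c_i$ and $c_l$, together with $\partial_{[a_i,c_i]}D_i$ and $\partial_{[c_l,b_l]}D_l$, cuts off a pocket $R_1$ next to the chain.
\item Each $D_h\in S_1$ lies in $\calh(T)$, meets the chain, and avoids $D_i$, $D_l$ and $D_{ijl}$, so it stays in $R_1$.
\item Because $c_i$ and $c_l$ lie on the inner arcs, $R_1\subseteq R_1'$.
\item $R_1'$ is convex, so it equals $\calh(T_1)$, and each disk has a single arc on its boundary.
\end{itemize}

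\textbf{Two secondary problems.}
\begin{itemize}
\item Your reduction ``a failure of strong convexity pushes some disk of $T_1$ across the line through $\overline{ab}$'' is not exhaustive. A disk of $S_1$ could in principle give $D_l$ a second arc by bulging past $\partial D_l$ between the counterclockwise end of $D_l$'s $T$-arc and $b$, without crossing that line. Proving containment in $R_1'$ handles this case too.
\item The Voronoi fallback yields no contradiction. For pairwise disjoint disks every cell is star-shaped about its center, hence connected. The cell of any disk on the hull is unbounded. So a second hull arc makes that cell neither disconnected nor closed. Moreover, Lemma~\ref{lem:3IS} is about three disks and says nothing about the four-disk diagram of $\{D_i,D_l,D_j,D_h\}$.
\end{itemize}
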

\begin{proof}
We only prove the case for $S_1\cup\{D_i,D_l\}$; the other case can be proved analogously. 

Let $S'=S\cup \{D_i,D_j\}$. Consider the convex hull $\calh(S')$. 
Since $S'$ is strongly convex, each disk of $S'$ has a single arc appearing on $\partial \calh(S')$. 
We next define several points (see Figure~\ref{fig:pockets}). 
Let $a_i$ be the counterclockwise endpoint of the arc of $\partial D_i$ appearing on $\partial \calh(S')$. 
Let $b_l$ be the clockwise endpoint of the arc of $\partial D_l$ appearing on $\partial \calh(S')$. 
Let $c_i$ and $c_l$ be the points of $\partial D_i$ and $\partial D_l$ tangent to $D_{ijl}$, respectively.

\begin{figure}[t]
\begin{minipage}[h]{\textwidth}
\begin{center}
\includegraphics[height=2.2in]{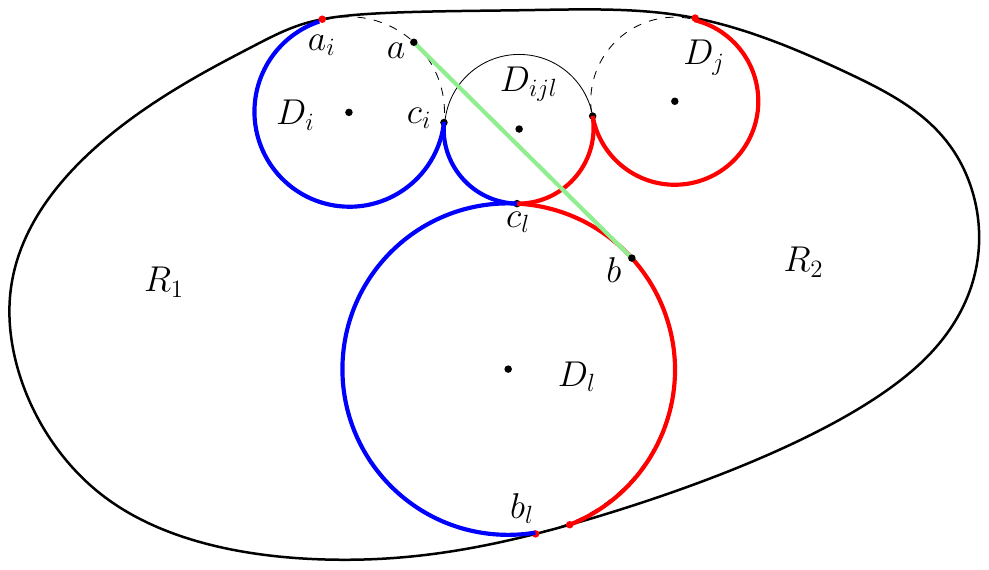}
\caption{\footnotesize Illustrating the proof of Lemma~\ref{lem:pocket}. The blue disk arcs are on $\partial R_1$. The red disk arcs are on $\partial R_2$.}
\label{fig:pockets}
\end{center}
\end{minipage}
\vspace{-0.1in}
\end{figure}

Since $S'$ is strongly convex and $S_1\subseteq \calD(i,l)$, by Lemma~\ref{lem:orderconsistent}, for each disk $D\in S_1$, the arc of $\partial D$ appearing in $\partial \calh(S')$ must be on $\partial_{[a_i,b_l]}\calh(S')$, the portion of $\partial \calh(S')$ counterclockwise from $a_i$ to $b_l$. 

Consider the region $R_1$ bounded by following curves in the counterclockwise order (see Figure~\ref{fig:pockets}): $\partial_{[a_i,b_l]}\calh(S')$, $\partial_{[c_l,b_l]}D_l$, $\partial_{[c_i,c_l]}D_{ijl}$, and $\partial_{[a_i,c_i]}D_i$. 
For each disk $D_h\in S_1$, 
since the arc of $\partial D_h$ appearing on $\partial \calh(S')$ is on $\partial_{[a_i,b_l]}\calh(S')$ and $D_h$ is disjoint from $D_i$, $D_l$, and $D_{ijl}$ by Lemma~\ref{lem:notintersect}, we obtain that $D_h$ must be contained in the region $R_1$. 

Starting from $b_l$, we move a point $q$ counterclockwise on $\partial D_l$ until a point $b$ such that $\ell_q$ is tangent to $D_i$, where $\ell_q$ is the line through $q$ and tangent to $D_l$; see Figure~\ref{fig:pockets}. We assign a direction to $\ell_q$ so that $D_l$ is on the left of $\ell_q$. Note that initially when $q$ is at $b_l$, since $b_l\in \partial \calh(S')$, every disk of $S'$ is on the left side of $\ell_q$. By definition, when $q$ is at $b$, $\ell_q$ is tangent to $D_i$ and $D_i$ is on the left side of $\ell_q$; let $\ell_b$ be $\ell_q$ when $q$ is at $b$. By definition, $\partial_{[b_l,b]}D_l$ must be on $\partial \calh(\{D_l,D_i\})$.

Similarly, starting from $a_i$, we move a point $p$ clockwise on $\partial D_i$ until a point $a$ such that $\ell_p$ is tangent to $D_l$, where $\ell_p$ is the line through $p$ and tangent to $D_i$. We assign a direction to $\ell_p$ so that $D_i$ is on the right of $\ell_p$. Note that initially when $p$ is at $a_i$, since $a_i\in \partial \calh(S')$, every disk of $S'$ is on the right side of $\ell_p$. By definition, when $p$ is at $a$, $\ell_p$ is tangent to $D_l$ and $D_l$ is on the right side of $\ell_p$; let $\ell_a$ be $\ell_p$ when $p$ is at $a$. By definition, $\partial_{[a,a_i]}D_i$ must be on $\partial \calh(\{D_l,D_i\})$. In addition, $\ell_a$ is exactly $\ell_b$, but $\ell_a$ and $\ell_b$ have opposite directions. 

We claim that $c_l\in \partial_{[b,b_l]}D_l$; see Figure~\ref{fig:pockets}. Indeed, by definition, $c_l$ is the intersection of $\overline{p_lv_{ijl}}$ with $\partial D_l$, with $p_l$ as the center of $D_l$ and $v_{ijl}$ as the center of $D_{ijl}$. Since $v_{ijl}$ is equidistant from $D_i$ and $D_l$, $c_l$ cannot be on the boundary of the convex hull of $D_i$ and $D_l$~\cite{ref:SharirIn85}. Since $\partial_{[b_l,b]}D_l\subseteq\partial \calh(\{D_l,D_i\})$, it follows that $c_l\not\in \partial_{[b_l,b]}D_l$. Therefore, $c_l$ must be on $\partial_{[b,b_l]}D_l$.

By a similar argument, we have $c_i\in \partial_{[a_i,a]}D_i$.

Now consider the region $R_1'$ bounded by following curves in the counterclockwise order: $\partial_{[a_i,b_l]}\calh(S')$, $\partial_{[b_l,b]}D_l$, $\overline{ba}$, and $\partial_{[a,a_i]}D_i$. Since $c_l\in \partial_{[b,b_l]}D_l$ and $c_i\in \partial_{[a_i,a]}D_i$, it must be the case that $R_1\subseteq R_1'$. Since all disks of $S_1$ are in $R_1$, they are also in $R_1'$. Based on how we obtain the common tangent $\overline{ab}$ of $D_i$ and $D_l$, observe that if we traverse the above bounding curves of $R_1'$ counterclockwise around $R_1'$, we always make left turn, implying that $R_1'$ is convex. Furthermore, $R_1'$ is exactly $\calh(S_1\cup \{D_i,D_l\})$. By definition, each disk of $S_1\cup \{D_i,D_l\}$ has a single arc appearing on $\partial R_1'$. Indeed, for each disk $D\in S_1$, its single arc on $\partial R_1'$ is in $\partial_{[a_i,b_l]}\calh(S')$. The single arc of $D_i$ on $\partial R_1'$ is $\partial_{[a,a_i]}D_i$ and the single arc of $D_l$ on $\partial R_1'$ is $\partial_{[b_l,b]}D_l$.
This proves that $S_1\cup \{D_i,D_l\}$ is strongly convex. 

We remark that since $R_1'$ is the convex hull of $S_1\cup \{D_i,D_l\}$ and $S_1\cup \{D_i,D_l\}\subseteq S'$, we have $R_1'\subseteq \calh(S')$. As $R_1\subseteq R_1'$, $R_1\subseteq \calh(S')$. We call $R_1$ the {\em left pocket} of $\calh(S')$. Similarly, we can define a right pocket $R_2$ of $\calh(S')$ using the disks of $S_2$ (see Figure~\ref{fig:pockets}). These two pockets are disjoint because they are separated by $D_i$, $D_l$, $D_j$, and $D_{ijl}$. These concepts will be used later. 
\end{proof}

\subsubsection{The backward direction}
Consider a disk $D_l\in\calD'_k(i,j)$ together with  $S_1\subseteq \calD_j(i,l)$ and $S_2\subseteq \calD_i(l,j)$ such that both $S_1\cup \{D_i,D_l\}$ and $S_2\cup \{D_l,D_j\}$ are strongly convex independent sets. Our goal is to show that $S_1\cup S_2\cup\{D_l\}\subseteq \calD_k(i,j)$ and $S_1\cup S_2\cup\{D_l,D_i,D_j\}$ is a strongly convex independent set. Let $S=S_1\cup S_2\cup\{D_l\}$. 

We begin with arguing $S\subseteq \calD_k(i,j)$. Let $D_h$ be a disk of $S$. If $D_h$ is $D_l$, then since $D_l\in \calD'_k(i,j)$ and $\calD'_k(i,j)\subseteq \calD_k(i,j)$, we have $D_l\in \calD_k(i,j)$. Otherwise, $D_h$ is either in $S_1$ or in $S_2$. 
We assume that $D_h\in S_1$ (the other case can be handled analogously). In the following, we argue that $D_h\in \calD_k(i,j)$. To this end, since $D_h\in S_1\subseteq \calD_j(i,l)\subseteq \calD(i,l)\subseteq \calD(i,j)$, we need to show that $D_h$ is disjoint from $D_i$, $D_j$, and $D_{ijk}$. 

First of all, since $D_h\in S_1\subseteq  \calD_j(i,l)$, $D_h$ must be disjoint from $D_i$. In the next two lemmas, we prove that $D_h$ is disjoint from $D_{ijk}$ and $D_j$, respectively. 

\begin{lemma}
$D_h$ is disjoint from $D_{ijk}$. 
\end{lemma}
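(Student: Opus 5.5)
We want $d_{p_h}(v_{ijk}) > r_{ijk} = d_{p_i}(v_{ijk})$; for $k=0$ we want $D_h$ to lie strictly below the upper common tangent $\ell_1$ of $D_i$ and $D_j$. The plan is to route this through the disk $D_l$. Since $D_l\in\calD'_k(i,j)\subseteq\calD_k(i,j)$, the disk $D_l$ is disjoint from $D_{ijk}$. Since $D_h\in\calD_j(i,l)$, the disk $D_h$ is disjoint from $D_{ijl}$. The idea is to show that $D_{ijl}$ "shields" the region of $\calD(i,l)$ that $D_h$ can occupy from $D_{ijk}$. More precisely, $D_h$ sits in the left pocket region cut off by $D_i$, $D_l$ and $D_{ijl}$, and $D_{ijk}$ lies on the other side of that barrier.

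First I will establish $v_{ijl}\prec v_{ijk}$ on $\gamma_{ij}$. Because $D_l\in\calD(i,j)$ and $\{D_l,D_i,D_j\}$ is a strongly convex independent set, Lemma~\ref{lem:preorder} applies with $h=l$ and $v=v_{ijk}$. Disjointness of $D_l$ and $D_{ijk}$ gives $d_{p_l}(v_{ijk})>r_{ijk}=d_{p_i}(v_{ijk})$, so Lemma~\ref{lem:preorder} yields $v_{ijl}\prec v_{ijk}$. For $k=0$ this is immediate, since $v_{ijk}$ is at $y=+\infty$.

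The heart of the argument is a nesting claim. On $\gamma_{ij}$, the disks tangent to both $D_i$ and $D_j$ and centered at $v$ increase along $\prec$, and their "upper" parts nest. Concretely, the portion of $D_{ijl}$ bounded by its tangency points $c_i,c_j$ with $D_i,D_j$ and facing downward, toward $\calD(i,j)$, is contained in $D_{ijk}$. I would argue this through the pencil of disks tangent to $D_i$ and $D_j$. Moving the center upward along $\gamma_{ij}$, each such disk is contained in the union of the next one together with $D_i$ and $D_j$ on the side below the arc through the tangency points. Equivalently, the region below $\ell_1$ bounded by $D_i$, $D_j$ and $\partial D_{ijk}$ is contained in the corresponding region for $D_{ijl}$.

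Then I place $D_h$. By Lemma~\ref{lem:pocket}'s geometry, applied in the backward setting, $S_1\cup\{D_i,D_l\}$ is strongly convex, $D_h\in\calD(i,l)$, and $D_h$ is disjoint from $D_i$, $D_l$ and $D_{ijl}$. From these facts I will argue that $D_h$ lies in the pocket bounded by the upper tangent of $D_i,D_l$ (the portion between them), $\partial D_l$, $\partial D_{ijl}$ and $\partial D_i$. This pocket is disjoint from $D_{ijk}$. Indeed, by the nesting claim together with the facts that $D_l$ is disjoint from $D_{ijk}$ and tangent to $D_{ijl}$ at $c_l$, the disk $D_{ijk}$ sits on the far side of the chain $D_i$–$D_{ijl}$–$D_l$.

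The main obstacle is making this shielding rigorous, namely showing that $D_{ijk}$ cannot reach into the pocket. I would first try a Voronoi argument on $\{D_i,D_l,D_{ijk}\}$-type weighted distances. Every point $q$ of $D_h$ has $d_{p_i}(q)>0$ and is separated from $v_{ijk}$ by $D_{ijl}\cup D_i\cup D_l$. The straight segment from a point of $D_h$ to $v_{ijk}$ must cross $D_i\cup D_l\cup D_{ijl}$, and the triangle inequality along that segment, using $d_{p_i}(v_{ijk})\le d_{p_l}(v_{ijk})$, should give $|q\,v_{ijk}|>r_{ijk}$. The delicate part is the crossing through $D_{ijl}$, where I will use $v_{ijl}\prec v_{ijk}$ to compare the radii.
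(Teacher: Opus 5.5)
Your first step, deducing $v_{ijl}\prec v_{ijk}$ from $D_l\cap D_{ijk}=\emptyset$ via Lemma~\ref{lem:preorder}, is correct and is also part of the paper's proof. The argument breaks down after that.

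\textbf{The nesting claim is false.} The claim you call the heart of the proof runs in the wrong direction. The downward-facing part of $D_{ijl}$ between $c_i$ and $c_j$ contains the tangency point $c_l\in D_l$. But $D_l\cap D_{ijk}=\emptyset$, since $D_l\in\calD_k(i,j)$, so that part cannot lie in $D_{ijk}$. Nor do these tangent disks ``increase'' along $\prec$: for equal radii $r$ and centers $(\pm1,0)$, the disk centered at $(0,t)$ has radius $\sqrt{1+t^2}-r$, which is not monotone in $t$. The true monotonicity is the reverse. As the center moves up $\gamma_{ij}$, the tangent disk recedes from the side of $\calD(i,j)$, and that is exactly what Lemma~\ref{lem:preorder} encodes.

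\textbf{The pocket is mis-specified.} Its outer side must be the hull chain of $S_1\cup\{D_i,D_l\}$, not a common tangent of $D_i$ and $D_l$. To appear on that hull between $D_i$ and $D_l$, the disk $D_h$ must cross their outer common tangent.

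\textbf{The shielding step is unfinished.} You leave it as ``should give''. Nothing forces the segment from $q\in D_h$ to $v_{ijk}$ to cross $D_i\cup D_l\cup D_{ijl}$ rather than leave through the outer side. Even where it does cross $D_{ijl}$, the triangle inequality yields no bound, because points of $D_{ijl}$ may lie in $D_{ijk}$.

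\textbf{A case is missing.} You never treat the case where $\{D_h,D_i,D_j\}$ is not strongly convex, in which $v_{ijh}$ does not exist.

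\textbf{What is missing.} No pocket geometry is needed; apply Lemma~\ref{lem:preorder} to $D_h$ exactly as you did to $D_l$. Because $D_h\in\calD_j(i,l)$ misses $D_{ijl}$, we have $d_{p_h}(v_{ijl})>d_{p_i}(v_{ijl})$.

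Suppose $\{D_h,D_i,D_j\}$ is strongly convex. It is independent by Lemma~\ref{lem:diskdisjoint}, whose proof does not use the present lemma. Lemma~\ref{lem:preorder} with $v=v_{ijl}$ gives $v_{ijh}\prec v_{ijl}$, hence $v_{ijh}\prec v_{ijk}$. Lemma~\ref{lem:preorder} with $v=v_{ijk}$ then gives $d_{p_h}(v_{ijk})>d_{p_i}(v_{ijk})=r_{ijk}$.

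Suppose instead it is not strongly convex. As in Lemma~\ref{lem:notintersect}, $D_h$ lies in one of the side regions of Figure~\ref{fig:notintersect}. It cannot separate $D_i$ from $D_j$: then all of $\gamma_{ij}$ would lie in the Voronoi cell of $D_h$, and $D_h$ would meet $D_{ijl}$. So $\gamma_{ij}$ is an edge between the cells of $D_i$ and $D_j$, and $d_{p_i}(v)<d_{p_h}(v)$ for every $v\in\gamma_{ij}$, in particular for $v=v_{ijk}$.

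This two-case argument is the paper's proof.
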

\begin{proof}
It suffices to show that $d_{p_h}(v_{ijk})>r_{ijk}$, or equivalently, $d_{p_h}(v_{ijk})>d_{p_i}(v_{ijk})$ as $r_{ijk}=d_{p_i}(v_{ijk})$. 

Since $D_l\in \calD_k'(i,j)$, $\{D_l,D_i,D_j\}$ is a strongly convex independent set and thus $v_{ijl}$ exists by Lemma~\ref{lem:3IS}. Clearly, both $v_{ijk}$ and $v_{ijl}$ are on the bisector $\gamma_{ij}$ of $D_i$ and $D_j$. Depending on whether $\{D_i,D_j,D_h\}$ is strongly convex, we distinguish two cases. 

\begin{itemize}
    \item 

If $\{D_i,D_j,D_h\}$ is not strongly convex, then similarly to the proof of Lemma~\ref{lem:notintersect}, we can obtain that for any point $v\in \gamma_{ij}$, $d_{p_i}(v)<d_{p_h}(v)$. As $v_{ijk}\in \gamma_{ij}$, it follows that $d_{p_i}(v_{ijk})<d_{p_h}(v_{ijk})$. 

\item
If $\{D_i,D_j,D_h\}$ is strongly convex, then $v_{ijh}$ exists by Lemma~\ref{lem:3IS}. We claim that $v_{ijh}\prec v_{ijl}$. To see this, since $D_h\in S_1\subseteq \calD_j(i,l)$, $D_h$ is disjoint from $D_{ijl}$, or equivalently, $d_{p_i}(v_{ijl})<d_{p_h}(v_{ijl})$. By Lemma~\ref{lem:preorder} (with $v=v_{ijl}$), we obtain $v_{ijh}\prec v_{ijl}$. 

Since $D_l\in \calD_k(i,j)$, following the similar argument as above, we can obtain $v_{ijl}\prec v_{ijk}$. Hence, we have $v_{ijh}\prec v_{ijk}$. Consequently, applying Lemma~\ref{lem:preorder} again (with $v=v_{ijk}$) leads to $d_{p_i}(v_{ijk})<d_{p_h}(v_{ijk})$. 
\end{itemize}

The lemma thus follows. 
\end{proof}

\begin{lemma}\label{lem:diskdisjoint}
    $D_h$ is disjoint from $D_j$. 
\end{lemma}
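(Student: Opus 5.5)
We must show that $D_h\in S_1$ is disjoint from $D_j$. The plan is to exploit the geometry of the disk $D_{ijl}$: $D_h$ avoids $D_{ijl}$, and we want to deduce that $D_h$ avoids $D_j$.

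\textbf{Setup.} We know the following.
\begin{itemize}
\item $D_h\in \calD_j(i,l)$, so $D_h$ is disjoint from $D_i$, $D_l$ and $D_{ijl}$.
\item $S_1\cup\{D_i,D_l\}$ is strongly convex.
\item $D_l\in\calD'_k(i,j)$, so $\{D_i,D_l,D_j\}$ is a strongly convex independent set and $D_{ijl}$ is tangent to $D_i$, $D_l$ and $D_j$.
\end{itemize}

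\textbf{Geometric argument.} I would reuse the pocket region from the proof of Lemma~\ref{lem:pocket}, this time built from $S_1\cup\{D_i,D_l\}$ instead of $S'$. Let $R_1'=\calh(S_1\cup\{D_i,D_l\})$. As in Lemma~\ref{lem:pocket}, $R_1'$ is bounded by four pieces:
\begin{itemize}
\item the hull chain from the arc of $D_i$ to the arc of $D_l$;
\item the arc of $D_l$;
\item the common tangent $\overline{ba}$ of $D_i$ and $D_l$;
\item the arc of $D_i$.
\end{itemize}
Since $D_h$ is disjoint from $D_i$, $D_l$ and $D_{ijl}$, $D_h$ lies in the region $R_1$ bounded by the hull chain, $\partial_{[c_l,b_l]}D_l$, $\partial_{[c_i,c_l]}D_{ijl}$ and $\partial_{[a_i,c_i]}D_i$. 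Here $c_i,c_l$ are the tangency points of $D_{ijl}$, and they lie on the far side of the tangent $\overline{ab}$.

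The key claim is that $R_1$ is separated from $D_j$ by the union $D_i\cup D_l\cup D_{ijl}$. The disk $D_{ijl}$ touches $D_j$ at a point $c_j$. Using the Voronoi diagram of $\{D_i,D_l,D_j\}$, I would argue the following.
\begin{itemize}
\item Every point of $D_j$ is weakly closer to $D_j$ than to $D_i$ or $D_l$, so $D_j$ lies in the closed Voronoi cell of $D_j$.
\item The region $R_1$ lies in the union of the cells of $D_i$ and $D_l$.
\end{itemize}
The second point holds because $R_1$ is cut off by the two Voronoi edges $\gamma_{il}$-side arcs emanating from $v_{ijl}$ and by the sub-curves through $c_i$, $c_l$. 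More concretely, any point $q\in R_1$ lies on the opposite side of the Voronoi edges $\gamma_{ij}$ (below $v_{ijl}$) and $\gamma_{lj}$ from $D_j$. This uses the fact that the cell of $D_j$ is the unbounded region between these two edges on the side away from $D_i$ and $D_l$, and that $\partial_{[c_i,c_l]}D_{ijl}$ lies in the cells of $D_i$ and $D_l$.

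\textbf{Cleaner alternative.} Instead of the cell argument, argue by contradiction. Suppose $D_h$ intersects $D_j$. Then $D_h$, lying in $R_1\subseteq R_1'$, would have to reach $D_j$. But $D_j$ lies outside $R_1'$, and more strongly it lies on the other side of the closed curve formed by $D_i$, $D_{ijl}$, $D_l$. Any path from $R_1$ to $D_j$ inside $D_h$ would have to cross $D_i\cup D_l\cup D_{ijl}$, which is impossible since $D_h$ is disjoint from all three. To finish, I need $D_j\cap R_1=\emptyset$. I would get this from two facts:
\begin{itemize}
\item $D_j$ is disjoint from $D_{ijl}$'s interior, since the two are tangent.
\item $D_j$ lies on the counterclockwise side of $D_l$ along the hull of $\{D_i,D_l,D_j\}$ (Lemma~\ref{lem:orderconsistent}), whereas $R_1$ lies in the pocket between $D_i$ and $D_l$ on the other side.
\end{itemize}

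\textbf{Main obstacle.} The hard step is rigorously showing that $D_j$ does not meet $R_1$, that is, that $D_i\cup D_l\cup D_{ijl}$ together with the hull chain truly separate the left pocket from $D_j$. I would first try the Voronoi argument: take the connected set $D_i\cup D_{ijl}\cup D_l$, which contains the touching points $c_i,c_l$, and use that the segments $\overline{v_{ijl}c_i}$ and $\overline{v_{ijl}c_l}$ lie in the cells of $D_i$ and $D_l$. Combined with the convexity of $R_1'$ and the position of $D_j$ beyond the tangent $\overline{ab}$ direction, this should show that $D_j$ lies outside the closed curve bounding $R_1$.
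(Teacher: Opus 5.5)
Your proposal has a genuine gap, and it is at the step you yourself call the main obstacle: the claim that $D_j$ does not meet the pocket $R_1$. In your Voronoi version this is the claim that $R_1$ lies in the cells of $D_i$ and $D_l$ in $\vd(\{D_i,D_l,D_j\})$.

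The facts you offer for it are all local. They are: $D_h$ avoids $D_i$, $D_l$, $D_{ijl}$; $S_1\cup\{D_i,D_l\}$ and $\{D_i,D_l,D_j\}$ are strongly convex independent sets with cyclic orders $i,h,l$ and $i,l,j$; and $D_j$ only touches $D_{ijl}$. These facts do not imply the lemma. Here is a configuration satisfying all of them:
\begin{itemize}
\item $D_i$ and $D_l$ have radius $1/2$ and centers $(-1,0)$ and $(1,0)$;
\item $D_j$ has radius $0.8$ and center $(2.5,0.6)$;
\item $D_h$ has radius $0.3$ and center $(2.3,-0.3)$.
\end{itemize}
Both $\{D_i,D_h,D_l\}$ and $\{D_i,D_l,D_j\}$ are strongly convex independent sets with the required orders. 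Also $v_{ijl}\approx(0,3.01)$ and $r_{ijl}\approx 2.67$, so $D_h$ is disjoint from $D_{ijl}$ and lies in your $R_1$. Yet $|p_hp_j|\approx 0.92<1.1$, so $D_h$ meets $D_j$. In this example $D_j$ enters $R_1$ through the hull chain of $S_1\cup\{D_i,D_l\}$, which is the one part of $\partial R_1$ your argument never controls.

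Two of your intermediate assertions also fail in general, even when $\calD$ is strongly convex:
\begin{itemize}
\item $D_j$ need not lie outside $R_1'$. It can dip between $D_i$ and $D_l$ past their common tangent, with $D_{ijl}$ wedged below it.
\item $\partial_{[c_i,c_l]}D_{ijl}$ need not lie in the cells of $D_i$ and $D_l$. This fails when $D_j$ is much larger than $D_i$ and $D_l$.
\end{itemize}

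What excludes the configuration above is a hypothesis you never use. The whole set $\calD$ is strongly convex, with $S_1\subseteq\calD(i,l)$ and $D_j\in\calD(l,i)$; in the example, $D_l$ is hidden inside $\calh(\{D_i,D_h,D_l,D_j\})$. The paper uses this global fact as follows.
\begin{itemize}
\item It takes the points $q_i,q_l$ of $D_i,D_l$ on $\partial\calh(\calD)$ and applies Lemma~\ref{lem:orderconsistent} to show that $S_1\cup\{D_i,D_l,D_j\}$ is strongly convex.
\item In the Voronoi diagram of that set, $v_{ijl}$ is a vertex, since $D_{ijl}$ meets no disk of $S_1$.
\item The cell of $D_j$ is then adjacent only to the cells of $D_i$ and $D_l$.
\item The disk nearest to $p_h$ is a Voronoi neighbor of $D_h$, and all such neighbors are disjoint from $D_h$, which gives $d_{p_j}(p_h)>r_h$.
\end{itemize}

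Your separation idea can also be made rigorous with the same global input.
\begin{itemize}
\item Pick points $q_i,q_h,q_l,q_j$ on the hull arcs of these four disks; they appear in this cyclic order on $\partial\calh(\calD)$.
\item The path $\overline{q_ic_i}\cup\overline{c_ic_l}\cup\overline{c_lq_l}$ lies in $\calh(\calD)\cap(D_i\cup D_{ijl}\cup D_l)$.
\item If $x\in D_h\cap D_j$, then the path $\overline{q_hx}\cup\overline{xq_j}$ lies in $\calh(\calD)\cap(D_h\cup D_j)$ and joins the interleaved pair of boundary points, so the two paths must cross.
\item But $D_h$ misses $D_i\cup D_{ijl}\cup D_l$, and $D_j$ meets this union only at $c_j$, which is not on the chord $\overline{c_ic_l}$. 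This contradiction proves the lemma.
\end{itemize}
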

\begin{proof}
Recall that $D_h\in S_1\subseteq D_j(i,l)\subseteq D(i,l)$ and $S_1\cup \{D_i,D_l\}$ is a strongly convex independent subset. 
Let $S_1'=S_1\cup \{D_i,D_l\}$ and $\calD'=\{D_i,D_l,D_j\}$. Since $D_l\in D_k'(i,j)$, $\calD'$ is a strongly convex independent set. Note that $D_j\in \calD(l,i)$. 

Let $\calD_1'=S_1'\cup \{D_j\}$. 
We claim that $\calD_1'$ must be strongly convex. Indeed, since $\calD$ is strongly convex, $\partial D_i$ has a point $q_i$ on $\partial \calh(\calD)$. Similarly, $\partial D_l$ has a point $q_l$ on $\partial \calh(\calD)$. Since $S_1'$ is strongly convex, $\partial D_i$ has a single arc $\alpha_i$ that appears on $\partial \calh(S_1')$. Since $\calD'$ is strongly convex, $\partial D_i$ has a single arc $\alpha_i'$ that appears on $\partial \calh(\calD')$. Clearly, $q_i$ is on both $\alpha_i$ and $\alpha_i'$. Because $\calD_1'=S_1'\cup \calD'$, the portion of $\partial D_i$ that appears on $\partial \calh(\calD_1')$ is $\alpha_i\cap \alpha_i'$, which is a single arc since both of them contain $q_i$. Therefore, we obtain that $\partial D_i$ has a single arc appearing on $\partial \calh(\calD_1')$. Similarly, $\partial D_l$ also has a single arc appearing on $\partial \calh(\calD_1')$ and that arc contains $q_l$.

\begin{figure}[t]
\begin{minipage}[h]{\textwidth}
\begin{center}
\includegraphics[height=2.2in]{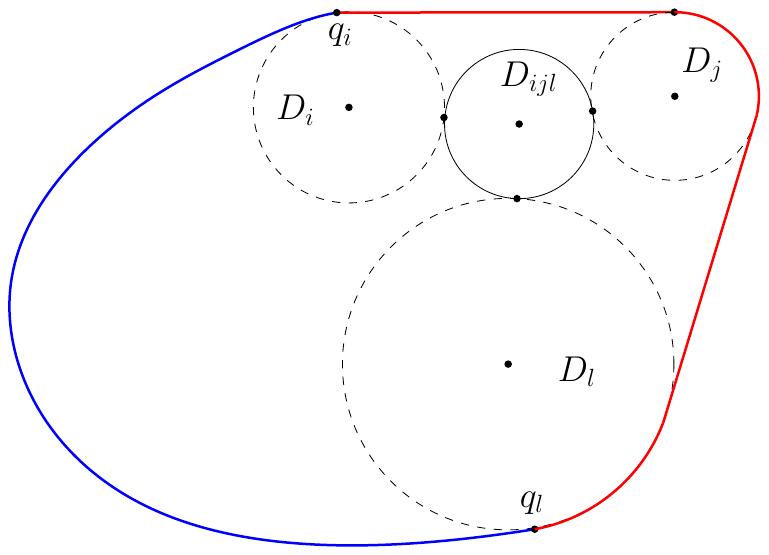}
\caption{\footnotesize Illustrating the proof of Lemma~\ref{lem:diskdisjoint}. The blue curve is $\partial_{[q_i,q_l]}\calh(S_1')$ and the red curve is $\partial_{[q_l,q_i]}\calh(\calD')$. $\partial \calh(\calD_1')$ is exactly the union of the two curves. 
}
\label{fig:pockets00}
\end{center}
\end{minipage}
\vspace{-0.1in}
\end{figure}

Since $S_1'$ is strongly convex, by Lemma~\ref{lem:orderconsistent}, the counterclockwise order of the disks of $S_1'$ appearing on $\partial \calh(S_1')$ is consistent with the index order of $\calD$. Similarly, since $\calD'$ is strongly convex, the order of the disks of $\calD'$ whose arcs appear on $\partial \calh(\calD')$ counterclockwise is consistent with the index order of $\calD$. As $S_1\in \calD(i,l)$ while $D_j\in \calD(l,i)$, we obtain that $\partial_{[q_i,q_l]}\calh(S_1')\cup \partial_{[q_l,q_i]}\calh(\calD')$ is exactly $\partial \calh(\calD_1')$; see Figure~\ref{fig:pockets00}. Furthermore, by definition, each disk of $S_1$ has a single arc appearing on $\partial_{[q_i,q_l]}\calh(S_1')$ and $D_j$ has a single arc appearing on $\partial_{[q_l,q_i]}\calh(\calD')$. Therefore, every disk of $\calD_1'$ has a single arc appearing on $\partial\calh(\calD_1')$. It follows that $\calD_1'$ is strongly convex. 

We consider Voronoi diagram $\vd(\calD_1')$ of the disks of $\calD_1'$. 
Since $S_1\subseteq D_j(i,l)$, no disk of $S_1$ intersects the disk $D_{ijl}$. This implies that the center $v_{ijl}$ of $D_{ijl}$ is a Voronoi vertex of $\vd(\calD_1')$ since it is equidistant from $D_i$, $D_l$, and $D_j$. 
Because $\calD_1'$ is strongly convex, $D_j$ has a single arc appearing on $\partial \calh(\calD_1')$ and its two neighboring arcs on $\partial \calh(\calD_1')$ are from $D_i$ and $D_l$, respectively. Therefore, the bisector $\gamma_{ij}$ of $D_i$ and $D_j$ defines a Voronoi edge with one endpoint at $v_{ijl}$ and the other one at the infinity~\cite{ref:SharirIn85}. Similarly, the bisector $\gamma_{lj}$ of $D_l$ and $D_j$ defines a Voronoi edge with one endpoint at $v_{ijl}$ and the other one at the infinity. This implies that the Voronoi cell $C_j$ of $D_j$ has only two neighboring Voronoi cells, one is defined by $D_i$ and the other defined by $D_l$. 

Now we are ready to show that $D_h$ is disjoint from $D_j$. It suffices to argue that $d_{p_j}(p_h)>r_h$, the radius of $D_h$. According to the above analysis, the Voronoi cell of $D_h$ (or equivalently, the Voronoi cell of the weighted center $p_h$ of $D_h$) is not adjacent to $C_j$. Let $\calD''$ denote the set of disks whose Voronoi cells are adjacent to the Voronoi cell of $D_h$. 
Since $D_j\not\in \calD''$, by the property of Voronoi diagrams, $d_{p_j}(p_h)$ is larger than the minimum weighted distance from $p_h$ to the centers of all disks of $\calD''$. Since $\calD''\subseteq S_1'$, $D_h\in S_1'$, and $S_1'$ is an independent set, $\calD''\cup \{D_h\}$ is also an independent set. Thus, the weighted distance from $p_h$ to every disk center of $\calD''$ is larger than $r_h$. It follows that $d_{p_j}(p_h)>r_h$. 
\end{proof}

The above proves that $S\subseteq \calD_k(i,j)$. 
The following lemma finally proves that $S\cup \{D_i,D_j\}$ is a strongly convex independent set. This completes the proof of the backward direction.

\begin{lemma}\label{lem:convex}
    $S\cup \{D_i,D_j\}$ is a strongly convex independent set. 
\end{lemma}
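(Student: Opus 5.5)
We need to show that $T=S\cup\{D_i,D_j\}$, with $S=S_1\cup S_2\cup\{D_l\}$, is both independent and strongly convex. The plan is to handle independence by reusing the disjointness facts already established, and then obtain strong convexity by gluing hulls along their common disks, as in the proof of Lemma~\ref{lem:diskdisjoint}.

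\textbf{Independence.} Every disk of $S_1$ is disjoint from $D_i$ and $D_l$, since $S_1\cup\{D_i,D_l\}$ is independent. By Lemma~\ref{lem:diskdisjoint} it is also disjoint from $D_j$; the symmetric statement gives the same for $S_2$ against $D_i$. The triple $\{D_i,D_l,D_j\}$ is independent because $D_l\in\calD'_k(i,j)$.

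The remaining pairs are $D_h\in S_1$ and $D_{h'}\in S_2$. Here I would use the pockets from Lemma~\ref{lem:pocket}. Since $D_h\in\calD_j(i,l)$, it is disjoint from $D_i$, $D_l$ and $D_{ijl}$, and its hull arc lies between $D_i$ and $D_l$. So, exactly as argued there, $D_h$ lies in a region $R_1$ bounded by $\partial\calh(S_1\cup\{D_i,D_l\})$, $D_l$, $D_{ijl}$ and $D_i$. Analogously $D_{h'}$ lies in a region $R_2$ bounded using $D_l$, $D_j$ and $D_{ijl}$.

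These two regions are separated by the union of $D_i$, $D_l$, $D_j$ and $D_{ijl}$. Concretely, the segments from $v_{ijl}$ to the tangency points $c_i,c_l,c_j$, together with the rays from the hull points of $D_i,D_l,D_j$, split the plane into three sectors. $R_1$ lies in the sector between $D_i$ and $D_l$, and $R_2$ in the one between $D_l$ and $D_j$. Hence $D_h\cap D_{h'}=\emptyset$.

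An alternative for this step is a Voronoi argument: in $\vd(S_1\cup S_2\cup\{D_i,D_l,D_j\})$ the vertex $v_{ijl}$ survives because no disk of $S_1\cup S_2$ meets $D_{ijl}$. The two sets then occupy different sides, and the nearest-neighbour argument of Lemma~\ref{lem:diskdisjoint} applies.

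\textbf{Strong convexity.} I would mimic Lemma~\ref{lem:diskdisjoint}. Let $q_i,q_l,q_j$ be points of $D_i,D_l,D_j$ on $\partial\calh(\calD)$; these lie on the boundary of the hull of every subset containing the disk.

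First, $\calD_1'=S_1\cup\{D_i,D_l,D_j\}$ is strongly convex, as shown in the proof of Lemma~\ref{lem:diskdisjoint}. Its boundary is $\partial_{[q_i,q_l]}\calh(S_1\cup\{D_i,D_l\})\cup\partial_{[q_l,q_i]}\calh(\{D_i,D_l,D_j\})$.

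Next I would glue $S_2$ in the same way. By Lemma~\ref{lem:orderconsistent} and the index ordering $S_1\subseteq\calD(i,l)$, $S_2\subseteq\calD(l,j)$, the hull boundary of $T$ should be
\[
\partial_{[q_i,q_l]}\calh(S_1\cup\{D_i,D_l\})\cup\partial_{[q_l,q_j]}\calh(S_2\cup\{D_l,D_j\})\cup\partial_{[q_j,q_i]}\calh(\{D_i,D_l,D_j\}).
\]
To justify this, I would check that each piece is locally convex, that consecutive pieces meet at the arcs of the shared disks $D_i,D_l,D_j$, and that every disk of $T$ lies inside the resulting closed convex curve.

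Each shared disk's arc is the intersection of its arcs from the two relevant hulls. Both of those arcs contain its $q$-point, so the intersection is a single arc. Each disk of $S_1$ or $S_2$ contributes its single arc from its own piece. Therefore $T$ is strongly convex.

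\textbf{Main obstacle.} I expect the hardest step to be rigorously establishing this boundary decomposition, in particular that no disk of $S_2$ protrudes beyond the $S_1$ piece. I would handle this via containment: every disk of $S_1\cup S_2$ lies in $\calh(\calD_1')\cup\calh(S_2\cup\{D_l,D_j\})$, and this union is convex because the two hulls share the tangent structure at $D_l$. The containment itself follows from the pocket regions lying inside the respective hulls.
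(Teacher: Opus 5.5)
Your proposal is correct and is essentially the paper's proof. Strong convexity comes from gluing the boundary portions of $\calh(S_1\cup\{D_i,D_l,D_j\})$, $\calh(S_2\cup\{D_i,D_l,D_j\})$ and the $D_j$-to-$D_i$ tangent portion along the shared disks, where each shared disk contributes a single arc because both contributing arcs contain its point $q_i$, $q_l$ or $q_j$; independence of $S_1$ against $S_2$ comes from the two pockets being separated by $D_i$, $D_l$, $D_j$ and $D_{ijl}$. The differences are only presentational: you settle independence first by defining the pockets from the sub-hulls rather than from $\calh(S\cup\{D_i,D_j\})$, and you spell out the sector separation and the convexity of the union of the two hulls, which the paper merely asserts.
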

\begin{proof}
Let $S'=S\cup \{D_i,D_j\}$. 
We first argue that $S'$ is strongly convex, by extending the proof in Lemma~\ref{lem:diskdisjoint}. 

Let $\calD'=\{D_i,D_l,D_j\}$. Let $S_1'=S_1\cup \calD'$ and $S_2'=S_1\cup \calD'$. 
The proof of Lemma~\ref{lem:diskdisjoint} already argues that $S_1'$ is strongly convex. Similarly, $S_2'$ is also strongly convex. 

Since $\calD$ is strongly convex, $\partial D_i$ has a point $q_i$ on $\partial \calh(\calD)$. Similarly, define $q_l$ and $q_j$ on $D_l$ and $D_j$, respectively. Since $S_1'$ is strongly convex, $\partial D_i$ has a single arc $\alpha_i^1$ that appears on $\partial \calh(S_1')$. Since $S_2'$ is strongly convex, $\partial D_i$ has a single arc $\alpha_i^2$ that appears on $\partial \calh(S_2')$. Clearly, $q_i$ is on both $\alpha_i^1$ and $\alpha_i^2$. Because $S'=S_1'\cup S_2'$, the portion of $\partial D_i$ that appears on $\partial \calh(S')$ is $\alpha_i^1\cap \alpha_i^2$, which is a single arc since both of them contain $q_i$. Therefore, we obtain that $\partial D_i$ has a single arc $\alpha_i$ appearing on $\partial \calh(S')$. Similarly, $\partial D_l$ has a single arc $\alpha_l$ appearing on $\partial \calh(S')$ and that arc contains $q_l$, and $\partial D_j$ has a single arc $\alpha_j$ appearing on $\partial \calh(S')$ and that arc contains $q_j$.  

Since $S_1'$ is strongly convex, by Lemma~\ref{lem:orderconsistent}, the order of the disks of $S_1'$ whose arcs appear on $\partial \calh(S_1')$ counterclockwise is consistent with the index order of $\calD$. As $S_1\subseteq \calD(i,l)$, for each disk $D\in S_1$, the single arc of $\partial D$ appearing on $\partial \calh(S_1')$ is on $\partial_{[q_i,q_l]}\calh(S_1')$. Similarly, as $S_2\subseteq \calD(l,j)$, for each disk $D\in S_2$, the single arc of $\partial D$ appearing on $\partial \calh(S_2')$ is on $\partial_{[q_l,q_j]}\calh(S_2')$. 

Due to the convexity of the three arcs $\alpha_i$, $\alpha_l$, and $\alpha_j$, $\partial \calh(S')$ is exactly $\partial_{[q_i,q_l]}\calh(S_1')\cup \partial_{[q_l,q_j]}\calh(S_2')\cup \partial_{[q_j,q_i]}\calh(\{D_i,D_j\})$; see Figure~\ref{fig:pockets10}. For each disk $D\in S'$, it has a single arc appearing on $\partial \calh(S')$. Indeed, if $D\in \calD'$, then we already showed above that this is true. If $D\in S_1$, then $\partial D$ only appears on $\partial_{[q_i,q_l]}\calh(S_1')$ and it has a single arc appearing there. Therefore, $D$ has a single arc appearing on $\partial \calh(S)$. If $D\in S_2$, then the argument is analogous. 
In summary, each disk of $S'$ has a single arc appearing on $\partial \calh(S')$. Hence, $S'$ is strongly convex.

\begin{figure}[t]
\begin{minipage}[h]{\textwidth}
\begin{center}
\includegraphics[height=2.2in]{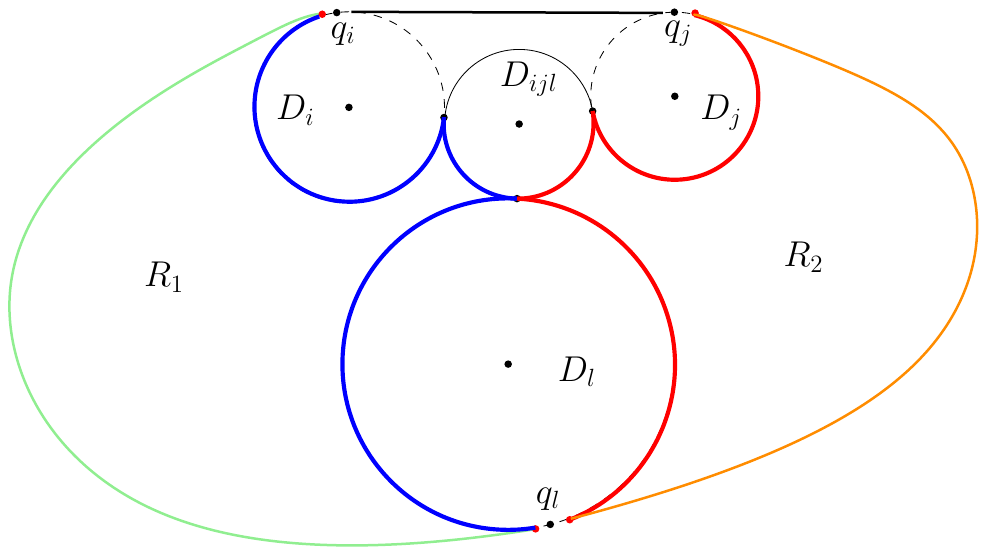}
\caption{\footnotesize Illustrating the proof of Lemma~\ref{lem:convex}. $R_1$ is bounded by the green curve and the three blue disk arcs, and $R_2$ is bounded by the orange curve and the three red disk arcs. All disks of $S_1$ (resp., $S_2$) are in $R_1$ (resp., $R_2$).}
\label{fig:pockets10}
\end{center}
\end{minipage}
\vspace{-0.1in}
\end{figure}

We next prove that $S'$ is an independent set. We already know that $\calD'$ is an independent set. Note also that $S_1'$ is an independent set. Indeed, by definition, $S_1\cup \{D_i,D_l\}$ is an independent set. By Lemma~\ref{lem:diskdisjoint}, no disk of $S_1$ intersects $D_j$. Since $\calD'$ is an independent set and $S_1'=S_1\cup \calD'$, $S_1'$ is an independent. Following a similar argument, $S_2'$ is also an independent set. 
Now to prove that $S'$ is an independent set, 
it remains to show that no disk of $S_1$ can intersect any disk of $S_2$.  

We consider the convex hull $\calh(S')$ of $S'$. Following the same way as in the proof of Lemma~\ref{lem:pocket}, we can define a left pocket $R_1$ and right pocket $R_2$ of $\calh(S')$ (see Figures~\ref{fig:pockets} or \ref{fig:pockets10}). As argued in the proof of Lemma~\ref{lem:pocket}, all disks of $S_1$ are in $R_1$ while all disks of $S_2$ are in $R_2$. Since $R_1$ and $R_2$ are disjoint, no disk of $S_1$ can intersect any disk of $S_2$. It follows that $S'$ is an independent set. 

This completes the proof of the lemma. 
\end{proof}

\subsection{Algorithm implementation}
\label{sec:impl}

We now discuss how to implement the algorithm. We focus on computing $W^*$. By slightly modifying the algorithm using a standard backtracking technique, an actual maximum-weight independent set can also be recovered within asymptotically the same time complexity.

To compute $W^*$, Lemma~\ref{lem:optwgt} implies that it suffices to compute $f(i,j,k)$ for all canonical triples $(i,j,k)$, including the case $k=0$. To this end, by Equation~\eqref{eq:deprel}, we must determine an order for solving the subproblems such that, when computing a value $f(i,j,k)$, all values $f(i,l,j)$ and $f(l,j,i)$ for $D_l \in \calD'_k(i,j)$ have already been computed.

We process pairs $(i,j)$ in increasing order of $j$, and for each fixed $j$ in decreasing order of $i$: for $j=2,\ldots,n$, we consider $i=j-1,j-2,\ldots,1$ in this order. If $D_i \cap D_j \neq \emptyset$, we simply set $f(i,j,0)=-W_i-W_j$. Otherwise, we compute $f(i,j,0)$ using Equation~\eqref{eq:deprel}. Next, for each disk $D_k \in \calD(j,i)$ such that $(i,j,k)$ is canonical, we compute $f(i,j,k)$ using Equation~\eqref{eq:deprel}. With this ordering, every $f$-value appearing on the right-hand side of Equation~\eqref{eq:deprel} involves indices that lie between $i$ and $j$ in cyclic order, and hence has already been computed.

Using Equation~\eqref{eq:deprel}, each subproblem $f(i,j,k)$ can be computed in $O(n)$ time by scanning all disks $D_l \in \calD'_k(i,j)$. Since there are $O(n^3)$ subproblems, the total running time of the algorithm is $O(n^4)$.

\paragraph{An improved solution.}
We now improve the running time to $O(n^3\log n)$. More specifically, we show that for any fixed pair $(i,j)$, the $O(n)$ subproblems $f(i,j,k)$ for all relevant indices $k$ can be computed in a total of $O(n\log n)$ time. This improvement is achieved by exploiting a monotonicity property of the sets $\calD'_k(i,j)$ with respect to a suitable ordering of the indices $k$.

Consider a pair $(i,j)$ such that $D_i$ does not intersect $D_j$, and let $(i,j,k)$ be a canonical triple. For each disk $D_l \in \calD'_k(i,j)$, we define its \emph{cost} as
\[
    \mathit{cost}(D_l)=f(i,l,j)+f(l,j,i)+W_l.
\]
Define $\calD'(i,j)$ as the set of disks $D \in \calD(i,j)$ such that $\{D,D_i,D_j\}$ forms a strongly convex independent set. Then $\calD'_k(i,j)$ consists precisely of those disks in $\calD'(i,j)$ that are disjoint from the disk $D_{ijk}$. By Equation~\eqref{eq:deprel}, computing $f(i,j,k)$ reduces to finding the disk of maximum cost among all disks in $\calD'(i,j)$ that are disjoint from $D_{ijk}$.

We process a fixed pair $(i,j)$ using the following approach, which computes $f(i,j,k)$ for all $k \in K$, where $K$ denotes the set of indices $k$ such that $(i,j,k)$ is a canonical triple. The total running time of this procedure is $O(n\log n)$.

Without loss of generality, we assume that the centers $p_i$ and $p_j$ lie on the same horizontal line, with $p_i$ to the left of $p_j$. We first compute the set $K$ and the set $\calD'(i,j)$, which can be done in $O(n)$ time.

For each $k \in K$, since $(i,j,k)$ is a canonical triple, the set $\{D_i,D_j,D_k\}$ is strongly convex, and thus the vertex $v_{ijk}$ exists by Lemma~\ref{lem:3IS}. The following lemma, which establishes a monotonicity property of the sets $\calD'_k(i,j)$, is central to our approach.

\begin{lemma}\label{lem:monotone}
Consider any $k,k' \in K$ with $v_{ijk} \prec v_{ijk'}$. For any disk $D_h \in \calD'(i,j)$, if $D_h$ is disjoint from $D_{ijk}$, then $D_h$ is also disjoint from $D_{ijk'}$. Consequently, $\calD'_k(i,j) \subseteq \calD'_{k'}(i,j)$.
\end{lemma}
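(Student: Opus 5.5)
The plan is to reduce disjointness from $D_{ijk}$ to a comparison of weighted distances at the point $v_{ijk}$ on the bisector $\gamma_{ij}$, and then transfer it along $\gamma_{ij}$ using Lemma~\ref{lem:preorder}.

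Fix $D_h\in\calD'(i,j)$. Then $\{D_h,D_i,D_j\}$ is a strongly convex independent set, so $v_{ijh}$ exists by Lemma~\ref{lem:3IS}, and $D_h\in\calD(i,j)$. Since $r_{ijk}=d_{p_i}(v_{ijk})$, the disk $D_h$ is disjoint from $D_{ijk}$ if and only if $d_{p_h}(v_{ijk})>r_{ijk}$, that is, $d_{p_i}(v_{ijk})<d_{p_h}(v_{ijk})$. This is exactly the computation already used in the proof of Lemma~\ref{lem:notintersect}.

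The key step is to apply Lemma~\ref{lem:preorder} with $v=v_{ijk}$, which lies on $\gamma_{ij}$. Because $D_h\in\calD(i,j)$, the inequality $d_{p_i}(v_{ijk})<d_{p_h}(v_{ijk})$ holds if and only if $v_{ijh}\prec v_{ijk}$. By hypothesis $v_{ijk}\prec v_{ijk'}$, and $\prec$ is a total order on $\gamma_{ij}$ (comparison of $y$-coordinates), so transitivity gives $v_{ijh}\prec v_{ijk'}$. Applying Lemma~\ref{lem:preorder} in the reverse direction with $v=v_{ijk'}$ then yields $d_{p_i}(v_{ijk'})<d_{p_h}(v_{ijk'})$. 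Since $r_{ijk'}=d_{p_i}(v_{ijk'})$, this means $D_h$ is disjoint from $D_{ijk'}$.

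For the inclusion $\calD'_k(i,j)\subseteq\calD'_{k'}(i,j)$, recall the observation made before the lemma: $\calD'_k(i,j)$ is exactly the set of disks of $\calD'(i,j)$ that are disjoint from $D_{ijk}$, and likewise for $k'$. The inclusion therefore follows immediately from the first part.

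I do not expect a serious obstacle. The only point needing care is checking that the hypotheses of Lemma~\ref{lem:preorder} are met: $D_h\in\calD(i,j)$, and $\{D_h,D_i,D_j\}$ is a strongly convex independent set. Both hold by the definition of $\calD'(i,j)$. In addition, $v_{ijk}$ and $v_{ijk'}$ lie on $\gamma_{ij}$ because each is equidistant from $D_i$ and $D_j$. Degenerate ties such as $v_{ijh}=v_{ijk}$ are excluded by the general position assumption.
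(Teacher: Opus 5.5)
Your proposal is correct and follows the paper's proof essentially step for step. You rewrite disjointness from $D_{ijk}$ as $d_{p_h}(v_{ijk})>d_{p_i}(v_{ijk})$, apply Lemma~\ref{lem:preorder} to get $v_{ijh}\prec v_{ijk}$, use transitivity with $v_{ijk}\prec v_{ijk'}$, and apply Lemma~\ref{lem:preorder} again at $v_{ijk'}$ to conclude $d_{p_h}(v_{ijk'})>r_{ijk'}$.
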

\begin{proof}
    Consider a disk $D_h\in \calD'(i,j)$ such that $D_h$ is disjoint from $D_{ijk}$. Our goal is to show that $D_h$ is also disjoint from $D_{ijk'}$, or equivalently, $d_{p_h}(v_{ijk'})>r_{ijk'}$. 

    Since $D_h$ is disjoint from $D_{ijk}$, we have $d_{p_h}(v_{ijk})>r_{ijk}$. As $D_h\in \calD'(i,j)$, $\{D_h,D_i,D_j\}$ is a strongly convex independent set. Therefore, $v_{ijh}$ exists by Lemma~\ref{lem:3IS}. Since $d_{p_h}(v_{ijk})>r_{ijk}=d_{p_i}(v_{ijk})$, applying Lemma~\ref{lem:preorder} (with $v=v_{ijk}$) gives $v_{ijh}\prec v_{ijk}$. 

    On the other hand, since $v_{ijk}\prec v_{ijk'}$, we have $v_{ijh}\prec v_{ijk'}$. Consequently, applying Lemma~\ref{lem:preorder}  (with $v=v_{ijk'}$) gives $d_{p_h}(v_{ijk'})>d_{p_i}(v_{ijk'})=r_{ijk'}$. This proves the lemma. 
\end{proof}

In light of Lemma~\ref{lem:monotone}, our algorithm proceeds as follows. We sort all indices $k \in K$ according to the $\prec$ order on $\gamma_{ij}$; let $k_1,k_2,\ldots,k_g$ be the resulting sequence, where $g=|K|$. Since $g\leq n$, this sorting step takes $O(n\log n)$ time. Next, for each disk $D_h \in \calD'(i,j)$, we compute the smallest index $t \in [1,g]$ such that $D_h$ is disjoint from $D_{ijk_t}$. By Lemma~\ref{lem:monotone}, this can be done in $O(\log n)$ time via binary search on the sequence $k_1,k_2,\ldots,k_g$. We then assign this value $t$ as a \emph{label} to $D_h$.

Recall that our goal is to compute $f(i,j,k_t)$ for all $t=1,2,\ldots,g$, where each $f(i,j,k_t)$ is equal to the maximum cost among all disks disjoint from $D_{ijk_t}$. We process the indices in increasing order of $t$. For convenience, let $f(i,j,k_0)=0$. For each $t\geq 1$, we initialize $f(i,j,k_t)=f(i,j,k_{t-1})$, and then examine all disks $D_l \in \calD'(i,j)$ whose label is exactly $t$. For each such disk, if $f(i,j,k_t)<\mathit{cost}(D_l)$, we update $f(i,j,k_t)=\mathit{cost}(D_l)$. In this manner, all values $f(i,j,k_t)$ for $t=1,2,\ldots,g$ can be computed in a total of $O(n)$ time.

This completes the processing for a fixed pair $(i,j)$, with a total running time of $O(n\log n)$. Since there are $O(n^2)$ choices of pairs $(i,j)$, the overall running time of the algorithm is $O(n^3\log n)$. The following lemma summarizes our result.

\begin{lemma}
\label{lem:strconvexindset}
Given a set $\calD$ of $n$ weighted disks in the plane such that the disks are in strongly convex position and $\calD$ admits a maximum-weight independent set that is strongly convex, a maximum-weight independent set of $\calD$ can be computed in $O(n^3\log n)$ time.
\end{lemma}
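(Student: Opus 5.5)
Lemma~\ref{lem:strconvexindset} is essentially a summary of the preceding section, so the plan is to assemble correctness and running time from the lemmas already established.

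\textbf{Correctness.} By Lemma~\ref{lem:optwgt}, $W^*=\max_{i,j}\bigl(f(i,j,0)+W_i+W_j\bigr)$. This uses the hypothesis that $\calD$ admits a strongly convex maximum-weight independent set. It therefore suffices to compute every $f(i,j,k)$ correctly for all canonical triples, including $k=0$. Lemma~\ref{lem:relation} gives the recurrence~\eqref{eq:deprel}, whose proof is split into the forward direction (Lemmas~\ref{lem:notintersect} and~\ref{lem:pocket}) and the backward direction (the subsequent lemmas ending with Lemma~\ref{lem:convex}). Corollary~\ref{coro:zerof} covers the empty case. I would then argue by induction on the cyclic gap between $i$ and $j$ that the processing order is valid: $j$ increases, and for each fixed $j$, $i$ decreases. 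The right-hand side of~\eqref{eq:deprel} only involves $f(i,l,j)$ and $f(l,j,i)$ with $D_l\in\calD(i,j)$, and those pairs have strictly smaller gaps. Hence they have already been computed when $f(i,j,\cdot)$ is evaluated.

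\textbf{Efficient evaluation for a fixed pair $(i,j)$.} Each $f(i,j,k)$ is the maximum cost over disks of $\calD'(i,j)$ that are disjoint from $D_{ijk}$. The steps are:
\begin{enumerate}
\item Compute $K$, $\calD'(i,j)$, and the vertices $v_{ijk}$ in $O(n)$ time, using constant-size Apollonius computations.
\item Sort $K$ along $\gamma_{ij}$ by $\prec$ in $O(n\log n)$ time.
\item By Lemma~\ref{lem:monotone}, the set of $t$ for which $D_h$ is disjoint from $D_{ijk_t}$ is a suffix of $1,\ldots,g$. A binary search with $O(1)$-time disjointness tests therefore finds the label of $D_h$ in $O(\log n)$ time.
\item A prefix-maximum sweep over labels yields all $f(i,j,k_t)$ in $O(n)$ time.
\end{enumerate}
The case $k=0$ fits this scheme as the topmost element, since $\calD'_0(i,j)=\calD'(i,j)$. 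It can also simply be computed directly in $O(n)$ time. Summing over the $O(n^2)$ pairs gives $O(n^3\log n)$, and evaluating Lemma~\ref{lem:optwgt} costs $O(n^2)$ more. A standard backtracking pass, which stores the maximizing $D_l$ for each entry, recovers the actual set within the same time bound.

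\textbf{Main obstacle.} The delicate point is that the binary search is sound only if disjointness from $D_{ijk}$ is monotone in the $\prec$ order. This is exactly what Lemma~\ref{lem:monotone} provides through Lemma~\ref{lem:preorder}. I would also double-check that the ordering argument is consistent with cyclic indices when $i>j$. The pairs with $i>j$ are the ones whose sublists wrap around, and I would handle them by running the same gap-based order over cyclic differences $(j-i)\bmod n$, rather than the literal ordering by $j$.
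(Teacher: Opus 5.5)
Your proposal is correct and follows the paper's argument essentially step for step. Correctness comes from Lemmas~\ref{lem:optwgt} and~\ref{lem:relation}, and each pair costs $O(n\log n)$: sort $K$ along $\gamma_{ij}$, assign labels by binary search (valid by Lemma~\ref{lem:monotone}), then do a prefix-maximum sweep. Your cyclic-gap ordering over all pairs is a harmless alternative to the paper's $i<j$ ordering; the latter also suffices, because the lowest- and highest-indexed disks of an optimal strongly convex set are neighbors on its hull with no wrap-around, which is all Lemma~\ref{lem:optwgt} needs.
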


\section{The strongly-convex-input case}
\label{sec:input}

In this section, we consider the strongly-convex-input case, in which the input disks of $\calD$ are in strongly convex position but $\calD$ does not necessarily admit a maximum-weight independent set that is strongly convex (see Figure~\ref{fig:notstrcnv}).

\begin{figure}
\begin{minipage}[h]{\textwidth}
\begin{center}
\includegraphics[height=2in]{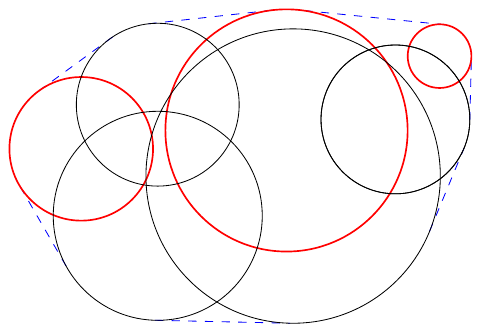}
\caption{\footnotesize Illustrating an example where the input set of disks is strongly convex, but the only maximum-weight independent set (shown in red) is not strongly convex (assuming the weight of each disk is $1$).}
\label{fig:notstrcnv}
\end{center}
\end{minipage}
\vspace{-0.1in}
\end{figure}

To handle this case, we reduce it to the setting in Section~\ref{sec:inputoutput}. To this end, we construct a set $B$ of $O(n)$ \emph{auxiliary points} satisfying the following two properties:
\begin{enumerate}
    \item No point of $B$ is contained in any disk of $\calD$.
    \item For any subset $\calD' \subseteq \calD$, $\calD' \cup B$ is strongly convex (in particular, $\calD \cup B$ is strongly convex).
\end{enumerate}

Each point of $B$ is treated as a special disk and assigned a weight of $1$. Let $S=\calD \cup B$. By the first property, any maximum-weight independent set of $S$ must include all points of $B$. Therefore, if $S'$ is a maximum-weight independent set of $S$, then $S' \setminus B$ is a maximum-weight independent set of $\calD$. In this way, computing a maximum-weight independent set of $\calD$ reduces to computing a maximum-weight independent set of $S$. Moreover, by the second property, $S$ is strongly convex and admits a maximum-weight independent set that is strongly convex. Consequently, computing a maximum-weight independent set of $S$ becomes an instance of the case in Section~\ref{sec:inputoutput} and can be solved in $O(n^3\log n)$ time by Lemma~\ref{lem:strconvexindset}.

In the remainder of this section, we focus on constructing a set $B$ of $O(n)$ auxiliary points satisfying the above two properties, and show that such a set can be computed in $O(n\log n)$ time.

To simplify the discussion, we assume that each disk in $\calD$ has positive radius. The same approach extends to the case where some disks degenerate to points, although the exposition becomes more involved. We also assume that $n \geq 3$, since smaller instances can be handled directly in $O(1)$ time. Because $\calD$ is strongly convex, each disk contributes exactly one arc to the boundary $\partial \calh(\calD)$. As in Section~\ref{sec:inputoutput}, let $\calD=\langle D_1,\ldots,D_n\rangle$ denote the cyclic order of disks appearing on $\partial \calh(\calD)$ in counterclockwise order.

\subsection{Defining $\boldsymbol{B}$}

For each disk $D_i \in \calD$, $D_i$ contributes a single arc $\alpha_i$ to the boundary $\partial \calh(\calD)$. Let $a_i$ denote the midpoint of $\alpha_i$, and let $\ell_i$ be the line through $a_i$ that is tangent to $D_i$. Thus, $\ell_i$ is also tangent to $\calh(\calD)$ at $a_i$. We orient $\ell_i$ so that $D_i$ lies to the left of $\ell_i$ (see Figure~\ref{fig:aug1A00}).

\begin{figure}[t]
\begin{minipage}[t]{\textwidth}
\begin{center}
\includegraphics[height=2.9in]{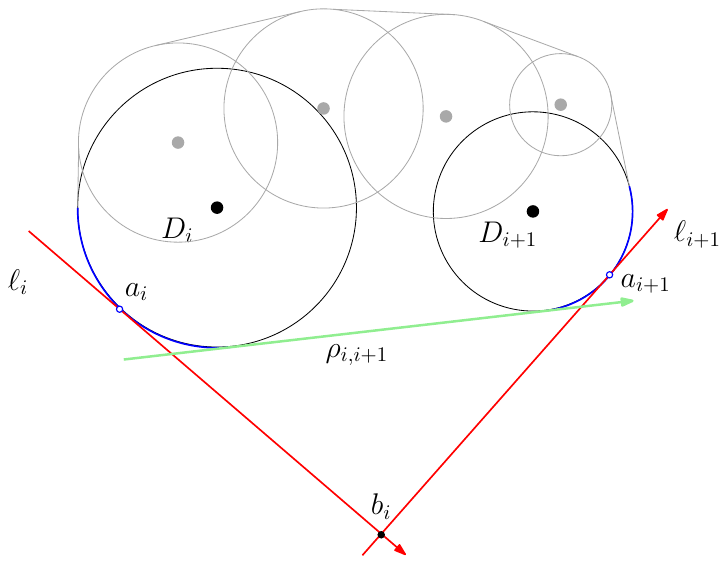}
\caption{\footnotesize Illustrating the definition of $b_i$.
}
\label{fig:aug1A00}
\end{center}
\end{minipage}
%\vspace{-0.1in}
\end{figure}

For each pair of consecutive disks $D_i$ and $D_{i+1}$ along $\partial \calh(\calD)$ (if $i=n$, then we let $i+1$ take modulo $n$, which equals $1$), we proceed as follows. Let $\rho_{i,i+1}$ be the directed common tangent line of $D_i$ and $D_{i+1}$ such that both disks lie to the left of $\rho_{i,i+1}$. By the definition of the disk order along $\partial \calh(\calD)$, the convex hull $\calh(\calD)$ lies entirely to the left of $\rho_{i,i+1}$, and the segment of $\rho_{i,i+1}$ between the two tangency points on $D_i$ and $D_{i+1}$ forms an edge of $\calh(\calD)$.

Since $\ell_i$ is tangent to $D_i$ at the midpoint of $\alpha_i$, the line $\ell_i$ must intersect $\rho_{i,i+1}$. Similarly, $\ell_{i+1}$ intersects $\rho_{i,i+1}$. Moreover, because $n \geq 3$, the lines $\ell_i$ and $\ell_{i+1}$ intersect at a point $b_i$ that lies to the right of $\rho_{i,i+1}$ (see Figure~\ref{fig:aug1A00}).

We define $B$ as the set of all such intersection points $b_i$, for $1 \leq i \leq n$. Thus, $|B| = n$.

If we treat each point $b_i$ as a special disk, then this construction places $b_{i-1}$, $D_i$, and $b_i$ in a degenerate configuration, since $\ell_i$ is tangent to all three. Note that our algorithm in Section~\ref{sec:inputoutput} also works for such degenerate cases. Alternatively, if desired, one could perturb each point $b_i$ infinitesimally toward $\rho_{i,i+1}$ so that the set $S=\calD \cup B$ is in general position. In what follows, we retain the definition of $B$ above and allow degeneracies, as this simplifies the exposition.

Since the convex hull $\calh(\calD)$ can be computed in $O(n\log n)$ time~\cite{ref:RappaportA92}, the set $B$ can also be constructed in $O(n\log n)$ time.

\subsection{Correctness}

We now prove that $B$ has the two desired properties. By definition, every point $b_i$ of $B$ is outside $\calh(\calD)$ and thus is outside every disk of $\calD$. Hence, the first property holds. In what follows, we focus on proving that $B$ has the second property. Let $S=\calD\cup B$. 

For each disk $D_i\in \calD$, let $H_i$ denote the closed halfplane bounded by $\ell_i$ and on the left of $\ell_i$. Hence, $H_i$ contains $\calh(\calD)$. Let $Q$ denote the common intersection of $H_i$ for all $1\leq i\leq n$ (see Figure~\ref{fig:aug1A10}). Note that $\calh(\calD)\subseteq Q$ as each halfplane $H_i$ contains $\calh(\calD)$. Furthermore, each line $\ell_i$ contains an edge of $Q$. Indeed, since $a_i\in  \calh(\calD)$ and $\calh(\calD)\subseteq H_j$ for each $j\neq i$, we know that $a_i$ is in all halfplanes $H_j$ with $1\leq j\leq n$ and $j\neq i$. Since $a_i$ is on the boundary of $H_i$, it follows that $a_i\in \partial Q$. Hence, $\ell_i$ contain a point of $\partial Q$. In addition, since the slopes of $\ell_i$, $i=1,2,\ldots,n$, in this order are monotonically changing, the intersections $b_i=\ell_i\cap \ell_{i+1}$ for all $i=1,2,\ldots,n$ are exactly the vertices of $Q$ in counterclockwise order along $\partial Q$ (and the line segments $\overline{b_{i}b_{i+1}}$ for all $i=1,2,\ldots,n$ are exactly the edges of $Q$ counterclockwise order along $\partial Q$).
Since $\calh(\calD)\subseteq Q$, $Q$ is exactly the convex hull $\calh(S)$ of $S$. 
For each $i$, as $a_i\in \partial Q$, we have $a_i\in \partial \calh(S)$.

\begin{figure}[t]
\begin{minipage}[t]{\textwidth}
\begin{center}
\includegraphics[height=2.9in]{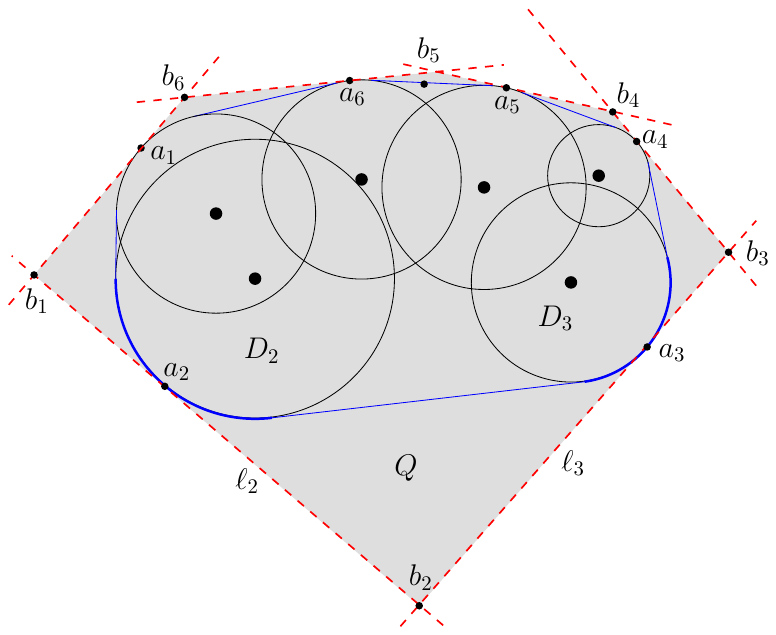}
\caption{\footnotesize Illustrating the definition of $Q$ (the gray region).
}
\label{fig:aug1A10}
\end{center}
\end{minipage}
%\vspace{-0.1in}
\end{figure}

Note that $a_i$ is in the only point of $D_i$ appearing on $\partial Q$. To see this, since $\calD$ is strongly convex, $D_i$ is in the interior of the halfplane $H_j$ for all $1\leq j\leq n$ with $j\neq i$. On the other hand, all points of $D_i$ other than $a_i$ are in the interior of $H_i$. Consequently, all points of $D_i$ except $a_i$ are in the interior of $Q$. 
Therefore, $a_i$ is in the only point of $D_i$ appearing on $\partial Q$.

With the above discussions, we are now in a position to prove the second property of $B$. 
Consider a subset $\calD'\subseteq \calD$. Our goal is to show that $S'=\calD'\cup B$ is strongly convex.
To this end, we will prove that every disk of $S'$ has only a single point appearing on $\partial \calh(S')$. 

We first show that $S'$ is convex. Indeed, for each auxiliary point $b_i\in B$, as $b_i\in \partial \calh(S)$ and $S'\subseteq S$,  $b_i$ must be on $\partial \calh(S')$. For each disk $D_i\in \calD'$, as $a_i\in D_i$, $a_i\in \partial \calh(S)$, and $S'\subseteq S$, we have $a_i\in \partial \calh(S')$. Thus, each disk of $S'$ must have a point on $\partial \calh(S')$. This proves that $S'$ is convex. 

We next show that for each disk $D_i\in \calD'$, $a_i$ is the only point appearing on $\partial\calh(S')$. Indeed, by definition,  $Q$ is also the convex hull $\calh(B)$ of $B$. Since $a_i$ is the only point of $D_i$ appearing on $\partial Q$, $a_i$ is the only point of $D_i$ appearing on $\partial \calh(B)$. Because $B\subseteq S'$ and $a_i\in \partial \calh(S')$, it follows that $a_i$ is the only point of $D_i$ appearing on $\partial \calh(S')$. 
This proves that $S'$ is strongly convex. 

In summary, we conclude that $B$ has the two desired properties. 

For reference purpose, the following lemma summarizes our result in this section. 

\begin{lemma}
    \label{lem:input}
Given a set $\calD$ of $n$ weighted disks in strongly convex position in the plane, a maximum-weight independent set of $\calD$ can be computed in $O(n^3\log n)$ time. 
\end{lemma}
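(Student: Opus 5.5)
The plan is to assemble Lemma~\ref{lem:input} from the reduction described in this section together with Lemma~\ref{lem:strconvexindset}. First, I would construct the set $B=\{b_1,\ldots,b_n\}$ of auxiliary points as defined above. This requires computing $\calh(\calD)$ in $O(n\log n)$ time, reading off the arcs $\alpha_i$ and their midpoints $a_i$, forming the tangent lines $\ell_i$, and intersecting consecutive lines to get each $b_i$. Each point of $B$ is treated as a degenerate disk of radius $0$ with weight $1$, and we set $S=\calD\cup B$, so that $|S|=2n$.

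Next, I would verify that $S$ falls under the hypotheses of Lemma~\ref{lem:strconvexindset}. The correctness discussion above already establishes the two properties. The first property says that no $b_i$ lies in any disk of $\calD$; since the points of $B$ also lie at distinct positions, they are pairwise disjoint. Hence for any independent set $I\subseteq S$, the set $(I\cap\calD)\cup B$ is also independent and has weight at least $W(I)$. It follows that some maximum-weight independent set $S^*$ of $S$ contains all of $B$; since weights are positive, in fact every maximum-weight independent set does. The second property, applied to $\calD'=S^*\setminus B$, then shows that $S^*$ is strongly convex; applied to $\calD'=\calD$, it shows that $S$ itself is strongly convex.

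Finally, I would run the algorithm of Lemma~\ref{lem:strconvexindset} on $S$. This takes $O((2n)^3\log(2n))=O(n^3\log n)$ time and returns a maximum-weight independent set $S^*$ of $S$. We output $S^*\setminus B$. For optimality, take any independent set $I$ of $\calD$. Then $I\cup B$ is independent in $S$, so $W(S^*\setminus B)=W(S^*)-n\ge W(I\cup B)-n=W(I)$.

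The main obstacle is applicability, not the accounting. Lemma~\ref{lem:strconvexindset} was derived assuming positive radii and general position, whereas the points $b_i$ have radius zero and the configuration $b_{i-1},D_i,b_i$ is degenerate, since the line $\ell_i$ is tangent to all three. I would address this in two steps. First, perturb each $b_i$ infinitesimally toward $\rho_{i,i+1}$, which restores general position. The two properties are preserved by the perturbation, because they are open conditions, apart from the tangency degeneracy that the perturbation is meant to remove. Second, appeal to the paper's remark that the dynamic program of Section~\ref{sec:inputoutput} extends to point disks. If needed, the point disks could instead be replaced by tiny disks of radius $\varepsilon$, with the properties checked for sufficiently small $\varepsilon$.
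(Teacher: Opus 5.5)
Your proposal is correct and follows the paper's reduction. You build $B$, use the first property to force all of $B$ into every maximum-weight independent set of $S=\calD\cup B$ (so deleting $B$ gives an optimum for $\calD$), use the second property to see that $S$ and its optimum are strongly convex, and apply Lemma~\ref{lem:strconvexindset} to the $2n$ objects, relying, as the paper does, on the remark that the algorithm of Section~\ref{sec:inputoutput} tolerates point disks and the tangency degeneracy.

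One correction to your perturbation remark: the second property is not an open condition here. Each $D_i$ meets $\partial\calh(B)$ only at the single point $a_i$, so a perturbation preserves the property only if every $b_i$ moves into the interior of $H_i\cap H_{i+1}$, for example toward a point of the hull edge on $\rho_{i,i+1}$. Moving $b_{i-1}$ and $b_i$ perpendicularly toward $\rho_{i-1,i}$ and $\rho_{i,i+1}$ puts both outside $H_i$ whenever $\alpha_i$ spans more than $\pi$, which leaves $D_i$ in the interior of the perturbed hull. Likewise, your $\varepsilon$-disks are only safe after such an inward move, with $\varepsilon$ small relative to it.
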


\section{The general case}
\label{sec:general}

In this section, we finally consider the general case in which $\calD$ is not necessarily strongly convex (but is still convex). 
In particular, a disk may contribute more than one (maximal) arc to $\partial\calh(\calD)$.

To solve the problem, we reduce it to the case in Section~\ref{sec:input}. To this end, we create a set $Z$ of $O(n)$ {\em auxiliary points} as special disks with each disk assigned a weight equal to $1$ such that the following two properties hold: 
\begin{enumerate}
    \item No point of $Z$ is contained in any disk of $\calD$. 
    \item $\calD\cup Z$ is strongly convex.
\end{enumerate}

Let $S=\calD \cup Z$. As in the reduction in Section~\ref{sec:input}, due to the first property, it suffices to compute a maximum-weight independent set in $S$. This is an instance of the problem in Section~\ref{sec:input} due to the second property, and thus the problem can be solved in $O(n^3\log n)$ time by Lemma~\ref{lem:input}.

In the rest of this section, we will focus on defining a set $Z$ of $O(n)$ points satisfying the above two properties. We will also show that computing $Z$ can be done in $O(n\log n)$ time. 

Consider a disk $D_i$ that has more than one arc on $\partial \calh(\calD)$. 
Note that at most one such arc can have a radian measure larger than or equal to $\pi$, and all other arcs have median measures strictly smaller than $\pi$. Let $\alpha_i$ be an arc of $D_i$ on $\partial \calh(\calD)$ whose radian measure is less than $\pi$.

Let $\alpha_{i-1}$ and $\alpha_{i+1}$ be the two disk arcs on $\calh(\calD)$ adjacent to $\alpha_i$ clockwise and counterclockwise, respectively; let $D_{i-1}$ and $D_{i+1}$ be the disks where these arcs lie on, respectively (see Figure~\ref{fig:aug2new}). 
Note that it is possible that $D_{i-1}$ (resp., $D_{i+1}$) is a single point. We first assume that both $D_{i-1}$ and $D_{i+1}$ have radii larger than $0$. We will discuss the other case later, which can be handled similarly. Refer to Figure~\ref{fig:aug2new} for an illustration of the notation introduced below. 

For each $j\in \{i-1,i,i+1\}$, let $x_j$ and $y_j$ be the clockwise and counterclockwise endpoints of $\alpha_j$, respectively. By definition, $\overline{y_{i-1}x_i}$ is an edge of $\calh(\calD)$ and its supporting line is tangent to $D_{i-1}$ at $y_{i-1}$ and tangent to $D_i$ at $x_i$. Let $\rho_{i-1,i}$ denote the supporting line of $\overline{y_{i-1}x_i}$. We orient $\rho_{i-1,i}$ so that $\calh(\calD)$ is on the left side of $\rho_{i-1,i}$. Similarly, let $\rho_{i,i+1}$ be the oriented supporting line of $\overline{y_ix_{i+1}}$ so that $\calh(\calD)$ is on its left side. 

Let $\ell(\overline{x_iy_i})$ denote the line containing $x_i$ and $y_i$. 
Without loss of generality, we assume that $\ell(\overline{x_iy_i})$ is horizontal such that $x_i$ is left of $y_i$. Thus, the arc $\alpha_i$ is below $\ell(\overline{x_iy_i})$.
Since the radian measure of $\alpha_i$ is less than $\pi$, $\rho_{i-1,i}$ must intersect $\rho_{i,i+1}$ at a point $q$ strictly below $\ell(\overline{x_iy_i})$. 

\begin{figure}[t]
\begin{minipage}[h]{\textwidth}
\begin{center}
\includegraphics[height=3.6in]{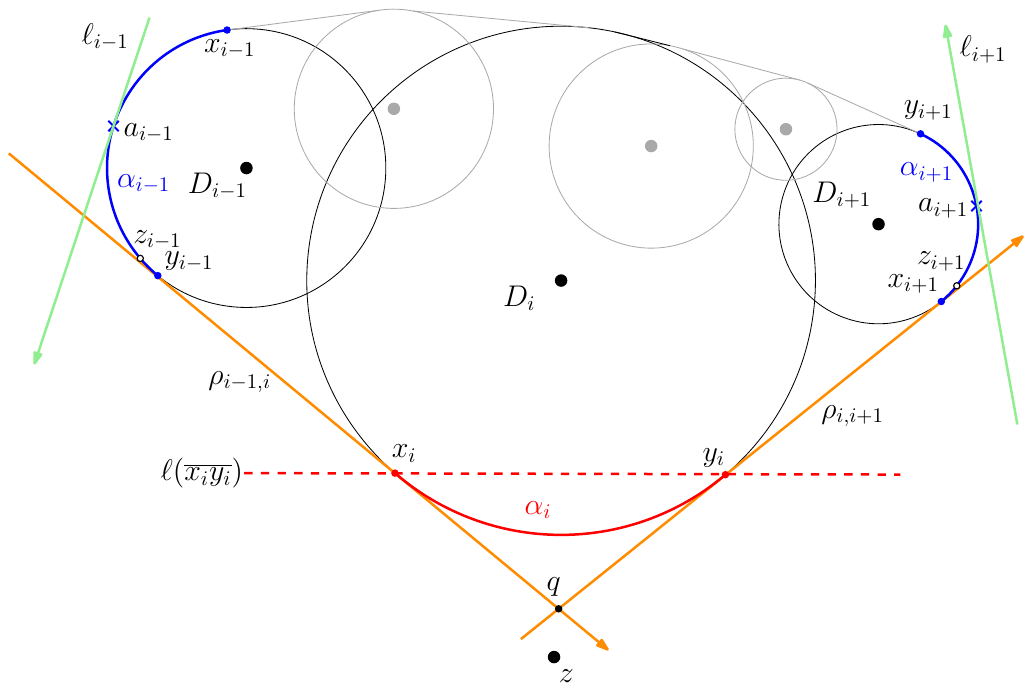}
\caption{\footnotesize Illustration of the definition of $z$.}
\label{fig:aug2new}
\end{center}
\end{minipage}
\vspace{-0.1in}
\end{figure}

Let $a_{i-1}$ be the middle point of the arc $\alpha_{i-1}$. Let $\ell_{i-1}$ be an oriented line tangent to $D_{i-1}$ at $a_{i-1}$ such that $\calh(\calD)$ is on the left of $\ell_{i-1}$. Similarly, we define $a_{i+1}$ and $\ell_{i+1}$ for $D_{i+1}$.

Note that by definition, $q$ must be to the left of both $\ell_{i-1}$ and $\ell_{i+1}$. Let $q'$ be a point strictly below $q$ but infinitesimally close to $q$. Then, $q'$ is strictly to the right of both $\rho_{i-1,i}$ and $\rho_{i,i+1}$, and also strictly to the left of both $\ell_{i-1}$ and $\ell_{i+1}$. This means that the region $Q$ is not empty, where $Q$ is the common intersection of the following four open halfplanes: the open halfplane to the right of $\rho_{i-1,i}$, the open halfplane to the right of $\rho_{i,i+1}$, the open halfplane to the left of $\ell_{i-1}$, and the open halfplane to the left of $\ell_{i+1}$. 

We pick an arbitrary point $z$ from $Q$ and use it as an auxiliary point; we show below that $z$ can be used to eliminate the arc $\alpha_i$. 

Let $z_{i-1}$ be a point on $\partial D_{i-1}$ such that $\overline{z_{i-1}z}$ is tangent to $D_{i-1}$ at $z_{i-1}$ and $D_{i-1}$ is to the left of the directed line containing $\overline{z_{i-1}z}$ with the direction from $z_{i-1}$ to $z$. Since $z$ is strictly to the left of $\ell_{i-1}$ and strictly to the right of $\rho_{i-1,i}$, the tangent point $z_{i-1}$ must be in the interior of the arc $\partial_{[a_{i-1},y_{i-1}]}D_{i-1}$, which is a subarc of $\alpha_{i-1}$. 

Symmetrically, let $z_{i+1}$ be a point on $\partial D_{i+1}$ such that $\overline{z_{i+1}z}$ is tangent to $D_{i+1}$ at $z_{i+1}$ and $D_{i+1}$ is to the right of the directed line containing $\overline{z_{i+1}z}$ with the direction from $z_{i+1}$ to $z$. Since $z$ is strictly to the left of $\ell_{i+1}$ and strictly to the right of $\rho_{i,i+1}$, the tangent point $z_{i+1}$ must be in the interior of the arc $\partial_{[x_{i+1},a_{i+1}]}D_{i+1}$, which is a subarc of $\alpha_{i+1}$.

Consider the convex hull $\calh(\calD\cup \{z\})$. Its boundary can be obtained from $\partial \calh(\calD)$ by replacing $\partial_{[z_{i-1},z_{i+1}]}\calh(\calD)$ with $\overline{z_{i-1}z}\cup \overline{zz_{i+1}}$. 
In particular, no point of $\alpha_i$ appears on $\partial \calh(\calD\cup \{z\})$.
Hence, every disk of $\calD$ has the same number of arcs appearing in $\partial \calh(\calD\cup \{z\})$ as in $\partial \calh(\calD)$ except that $D_i$ has one less arc than before. Furthermore, since $z$ is vertex of $\calh(\calD\cup \{z\})$, $\calD\cup \{z\}$ is still convex. 

The above assumes that both $D_{i-1}$ and $D_{i+1}$ are disks of radii larger than zero. The other case can be handled similarly. For example, if $D_{i-1}$ is a point $p_{i-1}$, then following the same way as above, we have $\alpha_{i-1}=x_{i-1}=y_{i-1}=a_{i-1}=z_{i-1}=p_{i-1}$. 
Hence, $p_{i-1}$ is still on $\partial \calh(\calD\cup \{z\})$. Therefore, it is still the case that $\calD\cup \{z\}$ is convex and  every disk of $\calD$ has the same number of arcs appearing in $\partial \calh(\calD\cup \{z\})$ as in $\partial \calh(\calD)$ except that $D_i$ has one less arc than before. If $D_{i+1}$ is a point, we can handle the situation similarly. 

In summary, we can find a point $z$ such that $\calD\cup \{z\}$ is convex and every disk of $\calD$ has the same number of arcs appearing in $\partial \calh(\calD\cup \{z\})$ as in $\partial \calh(\calD)$ except that $D_i$ has one less arc than before.  
We add $z$ to $Z$. Note that $z$ is outside every disk of $\calD$.

If $\calD\cup \{z\}$ is not strongly convex, then by following the same method we can find another auxiliary point with respect to $\calD\cup \{z\}$ to eliminate another arc. Since $\partial \calh(\calD)$ has $O(n)$ disk arcs~\cite{ref:RappaportA92}, we can find a set $Z$ of $O(n)$ auxiliary points such that $\calD\cup Z$ is strongly convex (and no point of $Z$ is inside any disk of $\calD$). 
After $\calh(\calD)$ is constructed initially, which can be done in $O(n\log n)$ time~\cite{ref:RappaportA92}, computing each auxiliary point $z$ takes $O(1)$ time following the above method. Hence, computing $Z$ can be done in $O(n\log n)$ time in total.

The following theorem summarizes our result. 

\begin{theorem}\label{theo:convexindset}
Given a set $\calD$ of $n$ weighted disks in convex position in the plane, a maximum-weight independent set of $\calD$ can be computed in $O(n^3\log n)$ time.
\end{theorem}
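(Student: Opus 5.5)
The plan is to chain the reductions already established. Given $\calD$ in convex position, first compute $\calh(\calD)$ in $O(n\log n)$ time~\cite{ref:RappaportA92}. Then iteratively add auxiliary points to eliminate surplus arcs, exactly as described above. Whenever some disk $D_i$ still contributes more than one arc to the current hull boundary, pick one of its arcs $\alpha_i$ of radian measure less than $\pi$; such an arc exists because at most one arc of $D_i$ can have measure at least $\pi$. Next, choose $z$ in the nonempty open region $Q$ cut out by the four halfplanes determined by $\rho_{i-1,i}$, $\rho_{i,i+1}$, $\ell_{i-1}$ and $\ell_{i+1}$.

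This step preserves the needed invariants. The new hull is obtained by replacing $\partial_{[z_{i-1},z_{i+1}]}$ of the old boundary with two tangent segments. So the set stays convex, $D_i$ loses exactly one arc, and every other disk keeps its arc count. In particular, the neighbouring disks $D_{i-1},D_{i+1}$ keep their arcs, because the tangency points lie in the interiors of subarcs of $\alpha_{i-1}$ and $\alpha_{i+1}$. Moreover, $z$ lies strictly outside the old hull, hence outside every disk. I must also check that previously added points remain valid. They are degenerate disks that contribute only single points, and they stay hull vertices unless they are cut off. Since $Q$ lies locally near the chord $\overline{y_{i-1}x_{i+1}}$-region adjacent to $\alpha_i$, between the tangency points on real disks, no earlier auxiliary point is removed. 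The total number of arcs is $O(n)$, so after $O(n)$ steps we obtain $Z$ with $|Z|=O(n)$ such that $\calD\cup Z$ is strongly convex and no point of $Z$ lies in any disk of $\calD$. Each step is $O(1)$ on a linked-list representation of the hull, so $Z$ is built in $O(n\log n)$ time.

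Next comes the weight argument. Give each point of $Z$ weight $1$ (any positive weight works). Because every $z\in Z$ is disjoint from every disk of $\calD$, and distinct points are disjoint, adding all of $Z$ to any independent set keeps it independent and strictly increases the weight. Hence every maximum-weight independent set $S^*$ of $S=\calD\cup Z$ contains $Z$. It follows that $S^*\setminus Z$ is a maximum-weight independent set of $\calD$: any independent $I\subseteq\calD$ yields the independent set $I\cup Z$ of weight $W(I)+|Z|$.

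Finally, $S$ is strongly convex with $n+O(n)=O(n)$ elements. Lemma~\ref{lem:input} then solves it in $O(n^3\log n)$ time, and this dominates the $O(n\log n)$ preprocessing. One caveat is that Lemma~\ref{lem:input} and Section~\ref{sec:inputoutput} were stated for positive radii, while $S$ contains points. I would invoke the remarks that the approach extends to degenerate disks; alternatively, one could replace each $z$ by a tiny disk inside $Q$, which remains outside all disks and preserves strong convexity for sufficiently small radius. I expect the main obstacle to be this degeneracy handling, together with verifying that successive insertions never destroy earlier arcs or points.
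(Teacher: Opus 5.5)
Your proposal is correct and follows the paper's proof essentially step for step. It uses the same iterative elimination of an arc of measure less than $\pi$ by a point $z$ chosen in the four-halfplane region $Q$, the same weight-$1$ argument forcing $Z$ into every optimal solution, and the same final appeal to Lemma~\ref{lem:input}. One small wording fix: $D_{i-1}$ or $D_{i+1}$ may itself be an earlier auxiliary point rather than a ``real disk''; in that case the tangency point $z_{i-1}$ (or $z_{i+1}$) is that point itself, as the paper notes, so your conclusion that no earlier point of $Z$ is cut off still holds.
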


The following corollary will be used in Section~\ref{sec:dispersion} to solve a dispersion problem.

\begin{corollary}\label{coro:10}
Given a set $\calD$ of $n$ weighted disks in convex position in the plane and a parameter $r>0$, one can compute in $O(n^3\log n)$ time a maximum-cardinality subset $\calD'\subseteq \calD$ such that the distance between any two disks in $\calD'$ is greater than $r$.
\end{corollary}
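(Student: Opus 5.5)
The natural plan is to reduce Corollary~\ref{coro:10} to Theorem~\ref{theo:convexindset} by enlarging every disk by $r/2$ and giving every disk unit weight. Concretely, for each $D_i\in\calD$ with center $p_i$ and radius $r_i$, define $D_i^+$ as the disk centered at $p_i$ with radius $r_i+r/2$, and let $\calD^+=\{D_i^+\}$. All weights are set to $1$, so a maximum-weight independent set of $\calD^+$ is a maximum-cardinality one.

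The first step is to check that independence in $\calD^+$ matches the separation condition in $\calD$. The distance between $D_i$ and $D_j$ is $|p_ip_j|-r_i-r_j$ when positive. The enlarged disks $D_i^+$ and $D_j^+$ are disjoint exactly when $|p_ip_j|>r_i+r_j+r$, which says that this distance exceeds $r$. Hence a subset $\calD'\subseteq\calD$ has all pairwise distances greater than $r$ if and only if the corresponding subset of $\calD^+$ is an independent set. So it suffices to compute a maximum-cardinality independent set of $\calD^+$ and map it back.

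The second step, which I expect to be the main obstacle, is to show that $\calD^+$ is still in convex position, so that Theorem~\ref{theo:convexindset} applies. The support function of $D_i^+$ in a unit direction $u$ is $h_{D_i^+}(u)=\langle p_i,u\rangle+r_i+r/2=h_{D_i}(u)+r/2$. Consequently $\calh(\calD^+)$ is the Minkowski sum of $\calh(\calD)$ with the disk of radius $r/2$, and its support function is $\max_i h_{D_i}(u)+r/2$. Disk $D_i$ contributes to $\partial\calh(\calD)$ exactly when $h_{D_i}(u)=\max_j h_{D_j}(u)$ for some direction $u$, and this condition is unchanged when every support function is shifted by the same constant $r/2$. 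Thus every $D_i^+$ appears on $\partial\calh(\calD^+)$, so $\calD^+$ is convex. (The arc structure is preserved as well, although only convexity is needed.) The general position assumption may fail after the enlargement, but it can be restored by an infinitesimal perturbation, or handled the same way the paper handles degeneracies.

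Finally, constructing $\calD^+$ takes $O(n)$ time, and running the algorithm of Theorem~\ref{theo:convexindset} on it with unit weights takes $O(n^3\log n)$ time. Returning the original disks corresponding to the computed independent set gives the desired subset, for a total running time of $O(n^3\log n)$.
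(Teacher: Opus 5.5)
Your proposal is correct and follows the paper's proof: you enlarge every disk by $r/2$, assign unit weights, note that pairwise distance greater than $r$ is equivalent to independence of the enlarged disks, and apply Theorem~\ref{theo:convexindset}. Your support-function argument that convex position survives the uniform enlargement is a rigorous version of a step the paper only asserts (``each disk of $\calD$ grows by the same radius'').
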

\begin{proof}
For each disk $D\in \calD$, let $D(r)$ be the disk with the same center as $D$ and with radius $r/2$ larger than that of $D$. 
For any subset $\calD'\subseteq \calD$, define $S(\calD')=\{D(r): D\in \calD'\}$. It is not difficult to see that for any subset $\calD'\subseteq \calD$, the distance between any two disks in $\calD'$ is greater than $r$ if and only if $S(\calD')$ is an independent set. Hence, the problem reduces to finding a maximum-cardinality independent set from $S(\calD)$. Since $\calD$ is in convex position, $S(\calD)$ must also be in convex position because each disk of $\calD$ grows by the same radius to obtain $S(\calD)$. Therefore, if we assign each disk of $S(\calD)$ a weight equal to $1$, then we can compute a maximum-cardinality independent set of $S(\calD)$ in $O(n^3\log n)$ time by Theorem~\ref{theo:convexindset}. 
\end{proof}

\section{The dispersion problem}
\label{sec:dispersion}

Let $\calD$ be a set of $n$ disks in convex position in the plane and let $k>0$ be an integer. For any two disks $D,D'\in \calD$, we define the {\em disk distance}  $\dist(D,D')$ as the minimum distance between any two points $q\in D$ and $q'\in D'$. 
The \textit{dispersion problem} for $\calD$ asks for a subset $\calD'\subseteq \calD$ of $k$ disks maximizing the minimum pairwise distance $\dist(D,D')$ of any two disks $D,D'\in \calD'$. 

Let $r^*$ denote the optimal objective value. It is not difficult to see that $r^*$ is equal to $\dist(D,D')$ for two disks $D,D'\in \calD$. Hence, $r^*$ belongs to the set $R=\{\dist(D,D'): D, D'\in \calD \}$. Clearly, $|R|=O(n^2)$.

Given a number $r\ge 0$, the \textit{decision problem} is to determine whether $r< r^*$, or equivalently, whether $\calD$ has a subset of $k$ disks whose minimum pairwise distance is greater than $r$. By Corollary~\ref{coro:10}, the decision problem can be solved in $O(n^3\log n)$ time.
To find $r^*$, we first compute $R$ and then do binary search in $R$ using the decision algorithm.
This yields an $O(n^3\log^2 n)$-time algorithm to compute $r^*$.
We can also find an optimal subset by Corollary~\ref{coro:10}, as explained in the proof of the following theorem. 

\begin{theorem}
Given a set of $n$ disks in convex position in the plane and an integer $k$, one can find a subset of $k$ disks maximizing their minimum pairwise distance in $O(n^3\log^2 n)$ time.
\end{theorem}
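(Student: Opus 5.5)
We prove the dispersion theorem by a binary search over the finite candidate set $R$, using Corollary~\ref{coro:10} as a decision oracle and then calling it once more to extract an explicit optimal subset. First we handle the trivial cases. If $k\le 1$ any single disk works, and if $k>n$ there is no feasible subset. So assume $2\le k\le n$.

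We compute $R=\{\dist(D,D') : D,D'\in\calD,\ D\ne D'\}$ in $O(n^2)$ time, since $\dist(D_a,D_b)=\max(0,|p_ap_b|-r_a-r_b)$. We then sort $R$ in $O(n^2\log n)$ time. As argued before the theorem, $r^*\in R$, because the optimum is attained by some pair of disks in an optimal subset.

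The decision oracle works as follows. For a given $r$, we apply Corollary~\ref{coro:10} to obtain a maximum-cardinality subset $\calD'$ whose pairwise distances all exceed $r$. Then $r<r^*$ holds if and only if $|\calD'|\ge k$. For the forward direction, if $r<r^*$, the optimal $k$-subset has all pairwise distances at least $r^*>r$. For the converse, any $k$ disks of $\calD'$ have minimum pairwise distance greater than $r$, and that minimum lies in $R$, so $r^*>r$.

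One caveat is the value $r=0$. The corollary requires $r>0$, but its reduction with $D(0)=D$ is exactly the original independent-set instance, so Theorem~\ref{theo:convexindset} answers that query directly. The predicate ``$r<r^*$'' is monotone in $r$, so binary search over the sorted $R$ finds the largest element $r'\in R$ with $r'<r^*$, or reports that none exists. Then $r^*$ is the successor of $r'$ in $R$, or the smallest element of $R$ if no such $r'$ exists. This uses $O(\log n)$ oracle calls, each costing $O(n^3\log n)$ time.

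To recover an optimal subset, we run the oracle once more at $r'$; if no $r'$ exists, we use Theorem~\ref{theo:convexindset} at distance $0$, and if additionally the smallest element of $R$ is $0$, any $k$ disks are optimal. This returns a set $\calD'$ with $|\calD'|\ge k$ and all pairwise distances greater than $r'$. Take any $k$ disks of $\calD'$. Their minimum pairwise distance lies in $R$ and exceeds $r'$, so it is at least the successor $r^*$, and by optimality it equals $r^*$.

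The main obstacle is this recovery step together with the boundary cases at $r=0$ (touching disks) and $k\le 1$. The key observation is that the strict inequality in the oracle, combined with the discreteness of $R$, lets us query just below the optimum. The total running time is $O(n^2\log n+n^3\log^2 n)=O(n^3\log^2 n)$.
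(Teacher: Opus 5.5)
Your argument is correct. Its search phase coincides with the paper's: a binary search over the $O(n^2)$ candidate values in $R$, with Corollary~\ref{coro:10} as the decision oracle. The difference is in how you extract an optimal subset.

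The paper runs the reduction once more at $r=r^*$ itself. It modifies the algorithm of Theorem~\ref{theo:convexindset} to treat the enlarged disks as open, so that externally tangent disks count as disjoint. This is a one-line fix, but it tacitly requires the dynamic program and its supporting lemmas to remain valid under tangencies. Those lemmas (e.g., Lemma~\ref{lem:3IS} and the pocket arguments) use strict disjointness, and the paper does not re-check them.

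You keep the oracle's strict inequality unchanged and instead exploit the discreteness of $R$. Querying at the predecessor $r'$ of $r^*$ in $R$ yields at least $k$ disks whose pairwise distances exceed $r'$. Being elements of $R$, these distances are therefore at least $r^*$. So you use the oracle purely as a black box, at the price of at most one extra call. In a standard binary search the oracle has already been evaluated at $r'$, so its output can simply be stored.

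You also handle cases the paper passes over:
\begin{itemize}
\item the query at $r=0$, which lies outside the stated range $r>0$ of Corollary~\ref{coro:10}, though there the reduction is just the original instance;
\item the trivial cases $k\le 1$ and $k>n$.
\end{itemize}

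One small simplification is available. When no $r'$ exists, $r^*=\min R$ and every $k$-subset already attains $r^*$, so the call to Theorem~\ref{theo:convexindset} at distance $0$ is unnecessary.
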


\begin{proof}
We first compute $r^*$ as discussed above. Once $r^*$ is computed, we apply the algorithm of Corollary~\ref{coro:10} to produce an optimal subset with $r=r^*$. The algorithm of Corollary~\ref{coro:10} will first compute a set $S(\calD)$ of disks and then apply the algorithm of Theorem~\ref{theo:convexindset} on $S(\calD)$. Here we need to slightly modify the algorithm of Theorem~\ref{theo:convexindset} by treating each disk of $S(\calD)$ as an open disk. For example, two disks are considered to be disjoint if they are outer tangent to each other. 

\end{proof}

% \section*{Declarations}
% \label{sec:Declarations}

% \paragraph{Conflict of interest statement.}
% The authors declare that they have no known competing financial interests or personal relationships that could have appeared to influence the work reported in this paper.

%\footnotesize

\bibliographystyle{plainurl}
\bibliography{refs}
\end{document}